\newtheorem{theorem}{\normalfont\textsc{Theorem}}
\newcommand{\name}[0]{\textsc{Serenity}\xspace}
\newcommand{\PConvPeakImprovement}[0]{\textsf{1.86$\times$}}
\newcommand{\DPPeakImprovement}[0]{\textsf{1.68$\times$}}
\newcommand{\PConvPeakExtraImprovement}[0]{\textsf{10.7\%}}
\newcommand{\PConvSchedulingLatency}[0]{\textsf{48.8 secs}\xspace}
\newcommand{\DPSchedulingLatency}[0]{\textsf{40.6 secs}\xspace}
\newcommand{\OffChipImprovement}[0]{\textsf{1.76$\times$}\xspace}
\begin{document}
\twocolumn[
\mlsystitle{Ordering Chaos: Memory-Aware Scheduling of \\ Irregularly Wired Neural Networks for Edge Devices}
\mlsyssetsymbol{equal}{*}
\mlsyssetsymbol{intern}{${\dagger}$}
\begin{mlsysauthorlist}
\mlsysauthor{Byung Hoon Ahn}{ucsd,intern}
\mlsysauthor{Jinwon Lee}{qti}
\mlsysauthor{Jamie Menjay Lin}{qti}
\mlsysauthor{Hsin-Pai Cheng}{duke,intern}
\mlsysauthor{Jilei Hou}{qti}
\mlsysauthor{Hadi Esmaeilzadeh}{ucsd}

\end{mlsysauthorlist}

\mlsysaffiliation{qti}{Qualcomm AI Research}
\mlsysaffiliation{ucsd}{University of California, San Diego}
\mlsysaffiliation{duke}{Duke University}

\mlsyscorrespondingauthor{Byung Hoon Ahn}{bhahn@eng.ucsd.edu}
%\mlsyscorrespondingauthor{Jinwon Lee}{jinwonl@qti.qualcomm.com}
%\mlsyscorrespondingauthor{Hadi Esmaeilzadeh}{hadi@eng.ucsd.edu}

% You may provide any keywords that you
% find helpful for describing your paper; these are used to populate
% the "keywords" metadata in the PDF but will not be shown in the document
\mlsyskeywords{Scheduling, Compiler, Memory Footprint, Irregularly Wired Neural Networks, Randomly Wired Neural Networks, Neural Architecture Search, Machine Learning, MLSys, MLSys, Edge, Dynamic Programming, Graph Rewriting}

\vskip 0.3in

\begin{abstract}
% move --> advance?
% given similar constraints?
Recent advance on automating machine learning through Neural Architecture Search and Random Network Generators, has yielded networks that deliver higher accuracy given the same hardware resource constrains, e.g., memory capacity, bandwidth, number of functional units.
% incompatible with conventional frameworks.
Many of these emergent networks; however, comprise of irregular wirings (connections) that complicate their execution by deviating from the conventional regular patterns of layer, node connectivity, and computation.
The irregularity leads to a new problem space where the schedule and order of nodes significantly affect the activation memory footprint during inference.
Concurrently, there is an increasing general demand to deploy neural models onto resource-constrained edge devices due to efficiency, connectivity, and privacy concerns.
% memory is constrained.
To enable such a transition from cloud to edge for the irregularly wired neural networks, we set out to devise a compiler optimization that caps and minimizes the footprint to the limitations of the edge device.
%
%\red
{
This optimization is a search for the schedule of the nodes in an intractably large space of possible solutions.
We offer and leverage the insight that partial schedules leads to repeated subpaths for search and use the graph properties to generate a signature for these repetition.
These signatures enable the use of \emph{Dynamic Programming} as a basis for the optimization algorithm.
However, due to the sheer number of neurons and connections, the search space may remain prohibitively large.
As such, we devise an \emph{Adaptive Soft Budgeting} technique that during dynamic programming performs a light-weight meta-search to find the appropriate memory budget for pruning suboptimal paths.
%
%Determining the right budget is crucial as too tight of a budget lead to no solution and too loose of a budget prolongs scheduling substantially, putting the optimization in a position of questionable utility.
%
Nonetheless, schedules from any scheduling algorithm, including ours, is still bound to the topology of the neural graph under compilation.
To alleviate this intrinsic restriction, we develop an \emph{Identity Graph Rewriting} scheme that leads to even lower memory footprint without changing the mathematical integrity of the neural network.
}
We evaluate our proposed algorithms and schemes using representative irregularly wired neural networks.
Compared to TensorFlow Lite, a widely used framework for edge devices, the proposed framework provides \PConvPeakImprovement reduction in memory footprint and \OffChipImprovement reduction in off-chip traffic with an average of less than one minute extra compilation time.
\end{abstract}
]

\printAffiliationsAndNotice{$^{\dagger}$Work done as intern at Qualcomm AI Research.}

% \linewidth =  8.255 cm
% \textwidth = 17.145 cm

% < GLOSSARY >
% - irregularly wired neural networks : this class of network
% - edge device : 
% - hard budget : upper bound for budget
% - soft budget : one that moves
% - graph rewriting : 
% - partial convolution : convolution split into multiple pieces...
\section{Introduction}
\label{sec:intro}
%
%Deep neural networks have largely benefited from feature learning using large number of \emph{fine-grained wirings} between instructions (multiply and add) constituting convolution and fully connected layers to outperform classical feature engineering.
%
%From a network architecture perspective, myriad of designs have either relied on stacking large number of layers~\cite{alexnet:2012,vggnet:2014}, or adding \emph{coarse-grained wirings} between layers based on human intuition to further improve the performance~\cite{googlenet:2015,resnet:2016,mobilenet:2017}.
%
%Advancing this trend, 
%
Growing body of work focuses on Automating Machine Learning (AutoML) using Neural Architecture Search (NAS)~\cite{nas:2016,adanet:2017,nasnet:2018,darts:2018,proxylessnas:2018,amoebanet:2018,swiftnet:2019} and now even, Random Network Generators~\cite{randomnet:2019,wirings:2019} which emit models with \emph{irregular} wirings, and shows that such \emph{irregularly wired neural networks} can significantly enhance classification performance.
These networks that deviate from \emph{regular topology} can even adapt to some of the constraints of the hardware (e.g., memory capacity, bandwidth, number of functional units), rendering themselves especially useful in targeting edge devices.
Therefore, lifting the regularity condition provides significant freedom for NAS and expands the search space~\cite{adanet:2017,swiftnet:2019,randomnet:2019}.

The general objective is to enable deployment of neural intelligence even on stringently constrained devices by trading off regular wiring of neurons for higher resource efficiency.
Importantly, pushing neural execution to edge is one way to address the growing concerns about privacy~\cite{privacy:2020} and enable their effective use where connectivity to cloud is restricted~\cite{dl_edge:2019}.
However, the new challenge arises regarding orchestrating execution of these irregularly wired neural networks on the edge devices as working memory footprint during execution frequently surpass the strict cap on the memory capacity of these devices.
The lack of multi-level memory hierarchy in these micro devices exacerbates the problem, because the network cannot even be executed if the footprint exceeds the capacity.
To that end, despite the significant potential of irregularly wired neural networks, their \emph{complicated execution pattern}, in contrast to previously \emph{streamlined execution of models with regular topology}, renders conventional frameworks futile in taking these networks to edge due to their large \emph{peak memory footprint}.
While peak memory footprint is largely dependent on scheduling of neurons, current deep learning compilers~\cite{tvm:2018,tc:2018} and frameworks~\cite{tensorflow:2016,pytorch:2017,caffe:2014} rely on basic topological ordering algorithms that are oblivious to peak memory footprint and instead focus on an orthogonal problem of tiling and kernel level optimization.
This paper is an initial step towards embedding peak memory footprint as first-grade constraint in deep learning schedulers to unleash the potential of the emergent irregularly wired neural networks.
As such, this paper makes the following contributions:
%\vspace{-1ex}
%\begin{enumpacked}
%\vspace{-1ex}
%\item \textbf{Memory-aware scheduling for irregularly wired neural networks.}

\textbf{(1) Memory-aware scheduling for irregularly wired neural networks.}
Scheduling for these networks is a topological ordering problem, which enumerates an intractably large space of possible schedules.
We offer and leverage the insight that partial schedules leads to repeated subpaths for search and use the graph properties to generate a signature for these repetition while embedding a notion of the running memory usage.
These signatures enable the use of \emph{Dynamic Programming} as a basis for the optimization algorithm.

\vspace{-1.3ex}
%\item \textbf{Adaptive soft budgeting for tractable compilation time.}
\textbf{(2) Adaptive soft budgeting for tractable compilation time.}
Even with the dynamic programming as the base, due to the sheer number of neurons and connections, the search space may remain too large (exponentially large) in practice.
As such, we devise an \emph{Adaptive Soft Budgeting} technique that uses a lightweight meta-search mechanism to find the appropriate memory budget for pruning the suboptimal paths.
This technique aims to find an inflection point beyond which tighter budgets may lead to no solution and looser budget prolongs the scheduling substantially, putting the optimization in a position of questionable utility.

\vspace{-1.3ex}
%\item \textbf{Identity graph rewriting for enabling higher potential in memory reduction.}
\textbf{(3) Identity graph rewriting for enabling higher potential in memory reduction.}
Any scheduling algorithm, including ours, is still bound to the topology of the neural graph under compilation.
To relax this intrinsic restriction, we devise an \emph{Identity Graph Rewriting} scheme that exchanges subgraphs leading to a lower memory footprint without altering the mathematical integrity of the neural network.
%
%\end{enumpacked}
%\vspace{-1ex}

%\vspace{-0.8ex}
Results show that our adaptive scheduling algorithm improves peak memory footprint for irregularly wired neural networks by \DPPeakImprovement compared to TensorFlow Lite, the de facto framework for edge devices.
Our graph rewriting technique provides an opportunity to lower the peak memory footprint by an additional \PConvPeakExtraImprovement.
Furthermore, our framework can even bring about \OffChipImprovement reduction in off-chip traffic for devices with multi-level memory hierarchy, and even eliminate the traffic in some cases by confining the memory footprint below the on-chip memory capacity.
These gains come at average of less than one minute extra compilation time.

\section{Challenges and Our Approach}
\label{sec:challenges}

% purpose of this figure is to visualize the irregular topology
\begin{figure}[t!]
	\centering
	\vspace{-1ex} % FORMATTING
	\subfigure[RandWire]{\includegraphics[width=0.375\linewidth]{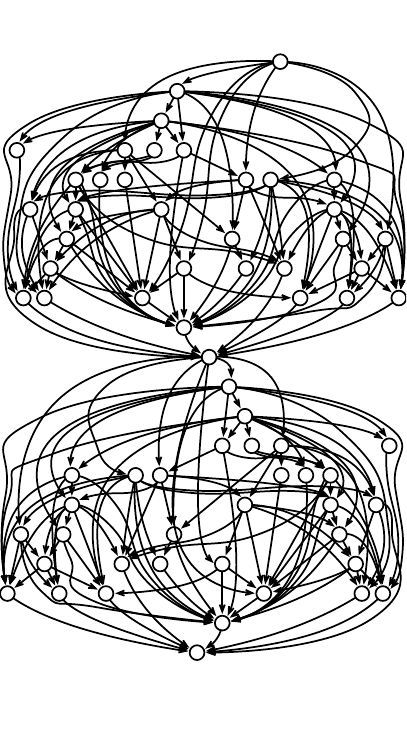}}
	\hspace{0.2in}
	\subfigure[SwiftNet]{\includegraphics[width=0.375\linewidth]{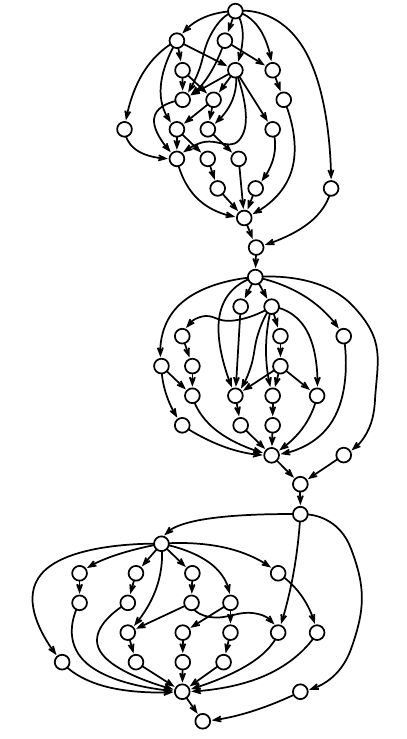}}
	\vspace{-2ex}
	\caption{Architecture of network models from NAS and Random Network Generators. Topology of such networks include distinctive \emph{irregular wirings} between the nodes.}
	\vspace{-2ex}
	\label{fig:complex_networks}
\end{figure}

% purpose of this figure is to visualize the effect of these networks
\begin{figure}[b!]
	\centering
	\vspace{-2ex} % FORMATTING
	\includegraphics[width=0.9\linewidth]{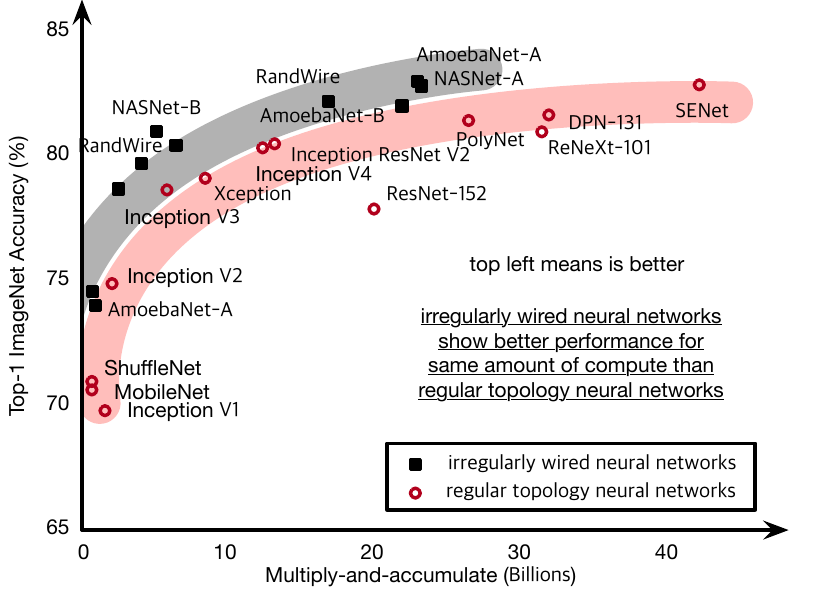}
	\vspace{-2ex}
	\caption{ImageNet accuracy vs number of multiply-and-accumulate, where irregularly wired neural networks show higher performance for same compute than regular topology neural networks. Plot for number of parameters also displays a similar trend.}
	\vspace{-2ex}
	\label{fig:network_performance}
\end{figure}

\subsection{Irregularly Wired Neural Networks}
Recent excitement in Automated Machine Learning (AutoML)~\cite{automl:2015,automl:2017,amc:2018,releq:2018,haq:2019,autokeras:2019} aims to achieve \emph{human out of the loop} in developing machine learning systems.
This includes Neural Architecture Search (NAS)~\cite{nas:2016,nasnet:2018,darts:2018,proxylessnas:2018,amoebanet:2018,swiftnet:2019} and Random Network Generators~\cite{randomnet:2019,wirings:2019} that focus on automation of designing neural architectures.
%
%Figure~\ref{fig:complex_networks} demonstrates that networks of this regime are characterized by their distinctive \emph{irregular graph topology}, with much more nodes (computational layers) and wires (dataflow), compared to conventional networks with regular graph topology.
Figure~\ref{fig:complex_networks} demonstrates that networks of this regime are characterized by their distinctive \emph{irregular graph topology} with much more irregular wirings (dataflow) compared to conventional networks with regular graph topology.
This paper refers to these networks as \emph{irregularly wired neural networks}.

From the performance perspective, these networks have shown to outperform manually designed architectures in terms of accuracy while using less resources.
In fact, majority of winning neural architectures in competitions with primary goal of reducing resources~\cite{lpirc:2017} rely on NAS, suggesting its effectiveness in that respect.
Figure~\ref{fig:network_performance} plots the accuracy of different models given their computation.
The figure clearly shows that the Pareto frontier of irregularly wired neural networks from NAS and Random Network Generators are better than the hand designed models with regular topology.
This indicates that the efficiency in terms of accuracy given fixed resources are better with the irregularly wired neural networks.

\begin{figure}[t!]
	\centering
	%\vspace{-1ex} % FORMATTING
	\subfigure[SwiftNet Cell A.]{\includegraphics[width=0.45\linewidth]{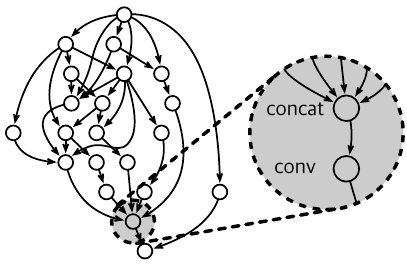}\label{fig:cell_a}}
	\hfill
	\subfigure[CDF of peak memory for different possible schedules.]{\includegraphics[width=0.45\linewidth]{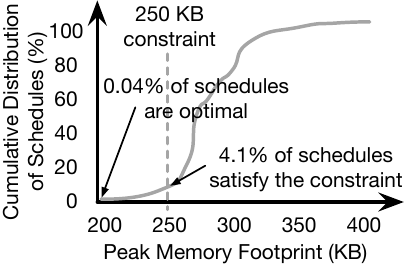}\label{fig:histogram}}
	\vspace{-2ex}
	\caption{CDF of the peak memory footprint for the different possible schedules of a given \emph{irregularly wired neural network}.}
	\vspace{-2ex}
	\label{fig:range}
\end{figure}

\subsection{Challenges}
Many existing compilers~\cite{tvm:2018,tc:2018} and frameworks~\cite{pytorch:2017,tensorflow:2016,caffe:2014} rely on basic topological ordering algorithms to schedule the graph.
%
%While the current approach may be sufficient to run conventional networks on server-class machines, such scheme may be unfit for running irregularly wired neural networks on edge devices.
While the current approach may be sufficient to run conventional networks on server-class machines, such scheme may be unfit for running irregularly wired neural networks on resource-constrained edge devices.
This is because, unlike running networks with regular topology, running irregular networks results in varied range of memory footprint depending on the schedule.
%
%For instance, given the constraints of the Visual Wake Words Challenge~\cite{vwwc:2019} (SparkFun Edge: 250KB weight/activation memory and 60M MACs), Figure~\ref{fig:histogram} shows that 4.1\% of the schedules would barely execute the winning network shown in~\ref{fig:cell_a}, while only 0.04\% of the schedules would have the optimal peak memory.
For instance, given the constraints of a representative edge device (SparkFun Edge: 250KB weight/activation memory and 60M MACs), Figure~\ref{fig:histogram} shows that 4.1\% of the schedules barely meets the hard memory constraint, while only 0.04\% would achieve the optimal peak memory.
% ~\footnote{Winner of Visual Wake Words Challenge at CVPR 2019 for highest accuracy model with innovative compression/sparsification techniques for the next-generation embedded device.}
%
In reality, such limitation will prevent further exploration regarding the diversity and innovation of network design, and in order to allow edge computing regime to take full advantage of the irregularly wired neural networks, this limitation should be alleviated if not removed.

\subsection{Design Objectives}
\label{sec:design_objective}
\paragraph{Scheduling algorithm.}
To address this issue, our work aims to find a schedule of nodes $s^*$ from the search space $\mathcal{S}$ that would minimize peak memory footprint $\mu_{peak}$.
$\mathcal{S}$ enumerates all possible orderings of the nodes $v \in \mathcal{V}$ where $\mathcal{V}$ is the set of all nodes within a graph $\mathcal{G}$.
%
%Equation~\ref{eq:obj_scheduling} summarizes the scheduling objective.

\vspace{-2ex} % FORMATTING
\begin{equation}
s^* = \argmin_{s} \mu_{peak}(s, \mathcal{G}), \qquad \text{for $s \in \mathcal{S}$}
\label{eq:obj_scheduling}
\end{equation}

The most straightforward way to schedule is a \emph{brute force} approach which just enumerates $\mathcal{S}$ and picks one with the minimum peak memory footprint.
While this extreme method may find an optimal solution, it is too costly in terms of time due to its immense complexity: $\Theta(|V|!)$ where $|V|$ denotes number of nodes in the graph.
One way to improve is to narrow down the search space to just focus on only the \emph{topological orderings} $\mathcal{S}_T \subset \mathcal{S}$.
However, this will still suffer from a complexity with an upper bound of $\mathcal{O}(|V|!)$ (takes days to schedule DAG with merely 30 nodes).
In fact, previous works~\cite{codegen:1976,scheduling_complexity:1989} already prove optimal scheduling for DAGs is NP-complete.
On another extreme are heuristics for topological ordering such as Kahn's algorithm~\cite{kahn:1962}, with complexity of $\mathcal{O}(|V|+|E|)$ where $V$ and $E$ are number of nodes and edges.
However, as demonstrated in Figure~\ref{fig:range}, such method may yield suboptimal schedule of nodes which will not run on the target hardware.
To this end, we explore \emph{dynamic programming} combined with \emph{adaptive soft budgeting} for scheduling to achieve an optimal solution while keeping the graph constant $s^*$, without adding too much overhead in terms of time.
We explain our algorithms in depth in Section~\ref{sec:dp_scheduling} and ~\ref{sec:optimized_implementation}.

\paragraph{Graph rewriting.}
Any scheduling algorithm including ours is intrinsically bounded by the graph topology.
Therefore, we explore to transform the search space through \emph{graph rewriting}~\cite{graph_rewriting:1999}.
Graph rewriting is generally concerned with substituting a certain pattern in the graph with a different pattern to achieve a certain objective.
For a computational dataflow graph, leveraging \emph{distributive, associative, and commutative properties} within the computation of the graph, graph rewriting can maintain the semantics while bringing significant improvements regarding some objective.
%
%Furthermore, some computations are just represented in various ways by taking advantage of the characteristics of the computation.
%
For example, in general programs, $\sum_i log{x_i}$ can be represented as $\sum_{odd i} log{x_i} + \sum_{even i} log{x_i}$ or $log{\prod_i x_i}$, while $x + x$ can be translated to $x \times 2$ or $x << 1$.
Likewise, we bring this insight to neural networks to find a set of possible transformations $\mathcal{X}$ that can rewrite the original graph $\mathcal{G}$ to a new graph $\mathcal{G}'$ that would also change our search space $\mathcal{S}$ to one with a lower peak memory footprint:

\vspace{-2ex} % FORMATTING
\begin{equation}
\mathcal{X}^* = \argmin_{\mathcal{X}}(\mu_{peak}(s^*, \mathcal{X}(\mathcal{G})))
\label{eq:obj_rewriting}
\end{equation}

We identify a set of candidate patterns for transformation $\chi: g \rightarrow g'$ ($g \in \mathcal{G}$ and $g' \in \mathcal{G}'$), which constitutes $\mathcal{X}$.
While transforming the graph, our method keeps the \emph{mathematical integrity} of the graph intact, thus not an approximation method.
We embed this systematic way to improve peak memory footprint and the search space as \emph{identity graph rewriting}, and we address this technique in Section~\ref{sec:graph_rewriting}.

\begin{figure*}
	\centering
	%\vspace{-2ex} % FORMATTING
	\includegraphics[width=0.9\linewidth]{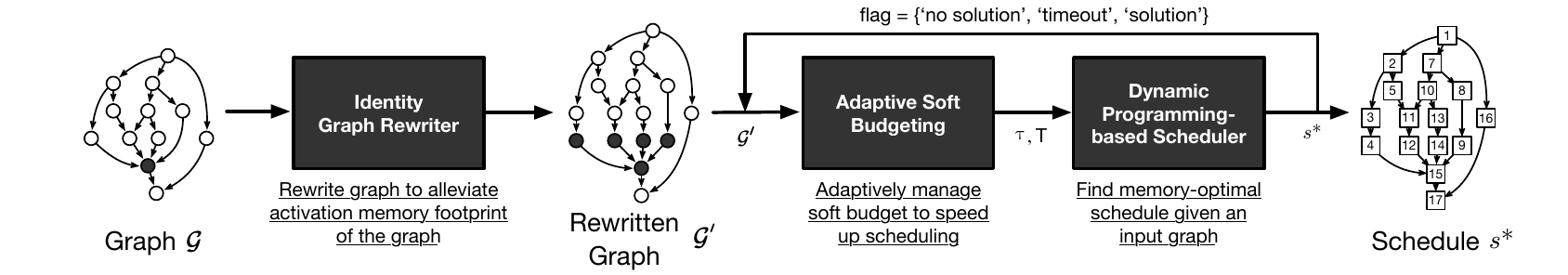}
	\vspace{-2ex}
	\caption{Overall workflow of \name, memory-aware scheduling of \emph{irregularly wired neural network}.}
	\vspace{-2ex}
	\label{fig:overview}
\end{figure*}

\section{\name: Memory-aware Scheduling of irregularly Wired Neural Networks}
\label{sec:methodology}

As discussed in Section~\ref{sec:challenges}, the objective is reducing the peak memory footprint while executing irregularly wired neural networks.
We propose \name, memory-aware scheduling that targets devices with restricted resources (e.g., edge devices).
Figure~\ref{fig:overview} summarizes the overall scheduling process, highlighting the major contributions of our approach.
Input to \name is a graph of irregularly wired neural network $\mathcal{G}$, which in fact acts as an intermediate representation (IR) during the scheduling process.
We augment this IR with the metadata of the nodes such as the operation type, input/output edges, input/output shapes, and memory cost.
Then the \emph{graph rewriter} transforms the graph $\mathcal{G} \rightarrow \mathcal{G}'$ to relax the memory costs of memory intensive patterns with the goal of reducing the peak memory footprint $\mu_{peak}$ of $\mathcal{G}$.
\name schedules the graph to an optimal schedule $s^*$ using the \emph{dynamic programming-based scheduler}.
However, since the scheduling may be slow due to the complexity, we scale down search space by leveraging \emph{divide-and-conquer} which partitions the graph into multiple subgraphs.
Them, we augment the scheduler with an \emph{adaptive soft budgeting} which prunes suboptimal paths by adaptively finding a budget for thresholding through a swift meta-search to speed up the scheduling process.
This section focuses on the innovations of \name: dynamic programming-based scheduling, divide-and-conquer, adaptive soft budgeting, and graph rewriting, which are explained in detail in Section~\ref{sec:dp_scheduling}, ~\ref{sec:optimized_implementation}, and ~\ref{sec:graph_rewriting}, respectively.

\subsection{Dynamic Programming-based Scheduling: Achieving Optimal Peak Memory Footprint}
\label{sec:dp_scheduling}

Our goal for the scheduling algorithm is to minimize the peak memory footprint $\mu_{peak}(s, \mathcal{G})$.
As stated in Section~\ref{sec:design_objective}, recursive algorithms that covers the entire search space $\mathcal{S}$ or the subspace of all topological orderings $\mathcal{S}_T \subset \mathcal{S}$ takes impractically long time.
This is primarily due to the repetitive re-computation of subproblems that upper bounds the algorithm by $\mathcal{O}(|V|!)$.
%
%Therefore, we leverage \emph{dynamic programming}~\cite{bellman:1961,dp:1966,held_karp:1962} which its \emph{memoization} scheme has shown effective in reducing the complexity of time-intensive algorithms by reusing solutions from their subproblems, while still finding optimal solution by sweeping the entire search space.
Therefore, we leverage \emph{dynamic programming}~
\cite{bellman:1961,dp:1966,held_karp:1962} which includes a \emph{memoization} scheme that has been shown to be effective in reducing the complexity of time-intensive algorithms by reusing solutions from their subproblems, while still finding optimal solution by sweeping the entire search space.

\begin{figure}[b!]
	\centering
	\vspace{-2ex} % FORMATTING
	\includegraphics[width=0.9\linewidth]{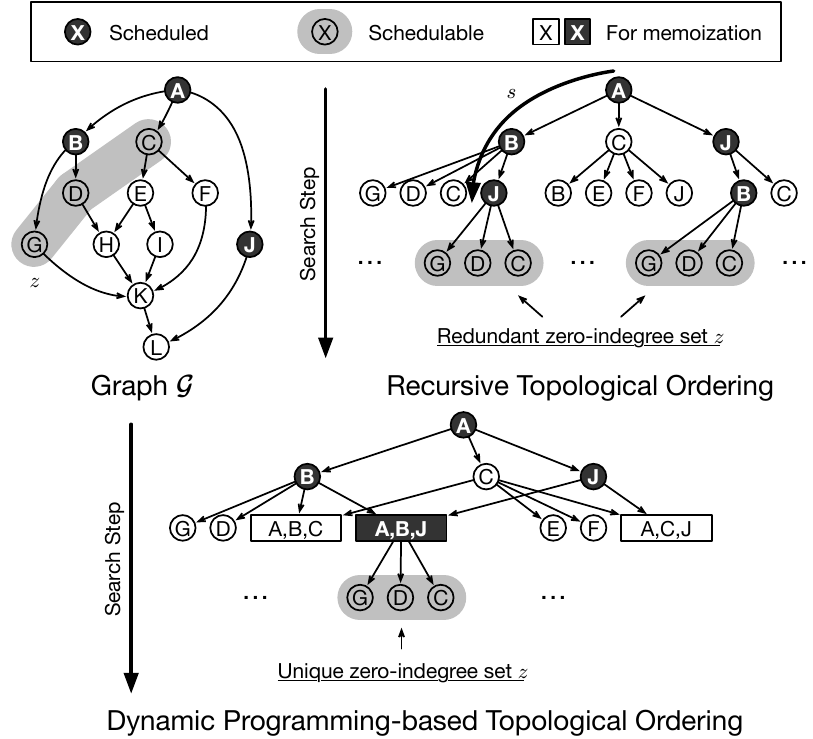}
	\vspace{-2ex}
	\caption{Illustration of identifying redundant \emph{zero-indegree} set $z$ and making $z$ unique (\emph{square}) throughout the topological ordering algorithm to reduce re-computation.}
	\vspace{-2ex}
	\label{fig:dp_subproblem}
\end{figure}

\paragraph{Identifying signature to enable dynamic programming.}
The first step to applying dynamic programming to a new problem is characterizing the structure of an optimal solution: $s^* = s_n^*$ ($s_n^*$ is an optimal solution for $n$ number of nodes).
Then, it requires identifying a recursive relationship between the optimal solution of a subproblem $s_{i}^*$ and the original problem $s_{i+1}^*$, and we do this by analyzing the straightforward \emph{recursive topological ordering}, which while inefficient sweeps the entire search space.
In essence, topological ordering algorithm is a repeated process of identifying a set of nodes that are available for scheduling and iterating the set for recursion.
In graph theory such a set of nodes available for scheduling is called \emph{zero-indegree set} $z$, where $z$ is a set of nodes which all of their incoming edges and the corresponding predecessor nodes (\emph{indegree}) have been scheduled.
Figure~\ref{fig:dp_subproblem} demonstrates the recursion tree of the different topological ordering algorithms, where the height of the tree is the search step and every path from the root to the leaf is a topological ordering $s \in S_T$.
The figure highlights the redundant $z$ in the recursive topological ordering in the recursion tree, then merges these $z$ to make them unique, identifying it as the signature for repetition, and prevent the aforementioned re-computation.
This makes the scheduling for $z$ into a unique subproblem, that constitutes the \emph{dynamic programming-based topological ordering}.

\begin{figure}[t!]
	\centering
	%\vspace{-2ex} % FORMATTING
	\includegraphics[width=0.9\linewidth]{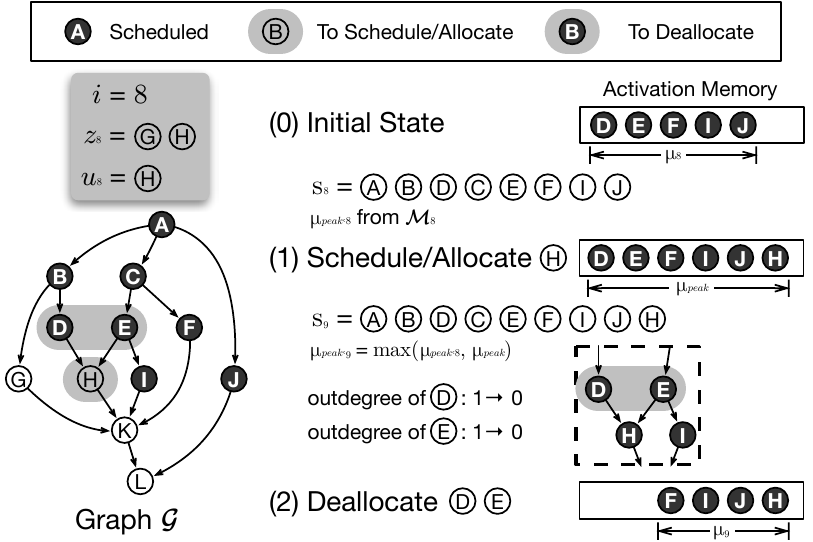}
	\vspace{-2ex}
	\caption{Visualization of scheduling the node $u_8$ = \protect{\circlewhite{\textsf{\size{8}{H}}}} during the search step $i=8$. Starting from $s_8$, $\mu_{8}$, and $\mu_{peak,8}$ the figure shows how the algorithm calculates $s_9$, $\mu_{9}$, and $\mu_{peak,9}$}
	\vspace{-2ex}
	\label{fig:dp_memory}
\end{figure}

\paragraph{Integrating the peak memory footprint constraint.}
On top of the dynamic programming formulation that shows potential for optimizing the search space significantly, we overlay the problem specific constraints to achieve the optimal solution.
In particular, we calculate the \emph{memory footprint} $\mu_{i+1}$ and its corresponding \emph{peak} $\mu_{peak,i+1}$ in each search step $i$ to select optimal path $s_{i+1}^*$ for \emph{memoization}.
Here, we clarify the process of a search step, explaining the details of calculating $\mu_{peak,i+1}$ and saving $s_{i+1}$ for each search step $i$.
In each search step, we start with number of \emph{unique} zero-indegree sets $z_i$ (signature), saved in $i^{th}$ entry of memoization $\mathcal{M}_i$.
For each $z_i$, we append the schedule up to the point $s_i$, sum of activations in the memory $\mu_i$ for the signature $z_i$, and the peak memory footprint of the $s_i$ denoted $\mu_{peak,i}$.
% (0)
Therefore, in each search step $i$, we start with $s_i$, $\mu_i$, and $\mu_{peak,i}$ for $s_i$.
% (1)
Then, when we iterate $z_i$ to schedule a new node $u_i$, its output activation is appended to $s_i$ to form $s_{i+1}$, and is \emph{allocated} in the memory.
Size of $u_i$ is product ($\prod$) of $u_i.$shape, where shape is a property of the activation tensor that includes channels, height, width, and the precision (e.g., byte, float), is added to $\mu_i$, so $\mu_{i+1} \leftarrow \mu_i + \prod(u_i.$shape$)$.
Then we use $\mu_{i+1}$ as $\mu_{peak}$ to update $\mu_{peak,i+1}$ (peak memory footprint for $s_{i+1}$).
Since some predecessors of $u_i$ will not be used anymore after allocating $u_i$, we update the \emph{outdegrees} of the node by decrementing them.
% (2)
Having updated the outdegree, we will be left with a \emph{zero-outdegree set} that denotes the nodes that are ready for deallocation.
We \emph{deallocate} the nodes in the set and update $\mu_{i+1}$ accordingly.

To demonstrate scheduling of a node $u_i$, Figure~\ref{fig:dp_memory} simulates scheduling a node $u_8$ = \circlewhite{\textsf{\size{8}{H}}} in $i = 8$.
In the figure, (1) \circlewhite{\textsf{\size{8}{H}}} is appended to $s_8$ and allocated to memory as it is scheduled, and then the scheduler records maximum of the $\mu_{peak,8}$ and the sum of all activations in the memory at this point as $\mu_{peak,9}$.
Then, it recalculates the outdegrees of the predecessor nodes of \circlewhite{\textsf{\size{8}{H}}}: \circlewhite{\textsf{\size{8}{D}}} and \circlewhite{\textsf{\size{8}{E}}}'s outdegree are decremented from one to zero.
(2) Then these nodes are deallocated and sum of the activation memory here is recorded as $\mu_9$.

\paragraph{Finding schedule with optimal peak memory footprint.}
After scheduling $u_i$, we save the new signature into the $\mathcal{M}_{i+1}$ for next search step $i+1$.
Since the goal of this work is to minimize the overall $\mu_{peak}$, we identify the corresponding optimal schedule $s^*_{i+1}$ for each $z_{i+1}$ by only saving $s_{i+1}$ with the minimum $\mu_{peak, i+1}$.
We integrate the aforementioned step of scheduling $u_i$ and updating $\mathcal{M}_{i+1}$ to complete the proposed \emph{dynamic programming-based scheduling algorithm}.
Algorithm~\ref{alg:dp_scheduling} summarizes the the algorithm.
As a first step, the algorithm starts by initializing the memoization table $\mathcal{M}_0$, then the algorithm iterates different search steps.
In each search step $i$, the algorithm performs the above illustrated memory allocation for all $u_i$ in $z_i$, and saving $s_{i+1}$, $\mu_{i+1}$, and $\mu_{peak, i+1}$.
After iterating all search steps to $n-1$, $s_*$ is saved in $\mathcal{M}_n$ with a unique entry, for $n$ being number of nodes in $\mathcal{G}$.
We provide the proof for the optimality of the peak memory footprint in the supplementary material.

\begin{algorithm}[t!]
   \caption{Dynamic Programming-based Scheduling}
   \label{alg:dp_scheduling}
\begin{algorithmic}[1]
	\STATE {\bfseries Input:} graph $\mathcal{G}$
	\STATE {\bfseries Output:} optimal schedule $s^*$
	\STATE // initialize memoization
	\STATE $s_0 \leftarrow []$, $\mu_0, \mu_{peak,0} \leftarrow 0$, $z_0 \leftarrow$ zero-indegree$(s_0, \mathcal{G})$
	\STATE $\mathcal{M}_0[z_0] \leftarrow (s_0, \mu_0, \mu_{peak,0})$
	\STATE // iterate search step
	\FOR{$i=0$ {\bfseries to} $n-1$}
		\STATE // iterate (schedule, current memory, peak memory)
		\FOR{$z_i, (s_i, \mu_i, \mu_{peak})$ {\bfseries in} $\mathcal{M}_i$}
			\FOR{$u_i$ {\bfseries in} $z_i$}
				\STATE $s_{i+1} \leftarrow s_i$.append$(u_i)$ // allocate
				\STATE $z_{i+1} \leftarrow$ zero-indegree$(s_{i+1}, \mathcal{G})$
				\STATE $\mu_{i+1}, \mu_{peak} \leftarrow \mu_i + \prod{(u_i.\text{shape})}$
				\STATE $\mu_{peak, i+1} \leftarrow \max(\mu_{peak, i}, \mu_{peak})$
				\FOR{$p_i$ {\bfseries in} $u_i.\text{preds}$}
					\IF{$p_i$ {\bfseries is in} zero-outdegree$(s_{i+1}, \mathcal{G})$}
					\STATE $\mu_{i+1} \leftarrow \mu_{i+1} - \prod{(p_i.\text{shape})}$ // deallocate
					\ENDIF
				\ENDFOR
				\STATE // memoize schedule with least peak memory
				\IF{$\mu_{peak, i+1} \leq \mathcal{M}_{i+1}[z_{i+1}].\mu_{peak, i+1}$}
				\STATE $\mathcal{M}_{i+1}[z_{i+1}] \leftarrow (s_{i+1}, \mu_{i+1}, \mu_{peak, i+1})$
				\ENDIF
			\ENDFOR
		\ENDFOR
	\ENDFOR
	\STATE $s^*, \mu_{peak}^* \leftarrow \mathcal{M}[\cdot]_{n}.s_n, \mathcal{M}[\cdot]_{n}.\mu_{peak,n}$ // solution
\end{algorithmic}
\end{algorithm}
% https://home.cse.ust.hk/~dekai/271/notes/L12/L12.pdf
% http://www.columbia.edu/~cs2035/courses/csor4231.S19/matrix-chain.pdf

\paragraph{Complexity of the algorithm.}
The complexity of the proposed dynamic programming-based scheduling is $\mathcal{O}(|V|\times2^{|V|})$, which is significantly faster than the exhaustive search of $\mathcal{S}_T$ with an upper bound complexity of $\mathcal{O}(|V|!)$.
Due to the space limitation, we present the derivation of the algorithm complexity in the supplementary material.

\subsection{Optimizing Scheduling Speed: Speeding up \\ the Dynamic Programming-based Scheduling}
\label{sec:optimized_implementation}

While the above scheduling algorithm improves complexity of the search, search space may still be intractable due to the immense irregularity.
Therefore, we devise \emph{divide-and-conquer} and \emph{adaptive soft budgeting} to accelerate the search by effectively shrinking and pruning the search space.

\begin{figure}[t!]
	%\vspace{-2ex} % FORMATTING
	\centering	
	\includegraphics[width=0.9\linewidth]{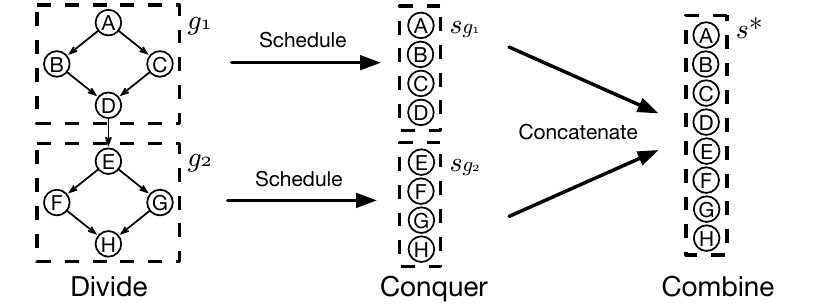}
	\vspace{-2ex}
	\caption{Illustration of \emph{divide-and-conquer}, which divides the graphs into multiple subgraphs (\emph{divide}), schedules each of them using the optimal scheduler (\emph{conquer}), then concatenates the sub-schedules to get the final schedule (\emph{combine}).}
	\vspace{-2ex}
	\label{fig:dnc}
\end{figure}

\paragraph{Divide-and-conquer.}
We can observe from Figure~\ref{fig:complex_networks} that the topology of irregularly wired neural networks are \emph{hourglass shaped} ( \raisebox{-1.5pt}{\rotatebox{90}{$\bowtie$}} ), because many NAS and Random Network Generators design \emph{cells} with single input and single output then \emph{stack them} to form an hourglass shape topology.
\cite{ilp_optimal:2000} shows that, during general purpose code scheduling, graphs can be partitioned (\emph{divide}) into multiple subgraphs and the corresponding solutions (\emph{conquer}) can be concatenated (\emph{combine}) to form an optimal solution for the overall problem.
While the complexity of the scheduling algorithm remains the same, this \emph{divide-and-conquer} approach can reduce the number of nodes in each subproblem, speeding up the overall scheduling time.
For instance, for a graph that can be partitioned into $N$ equal subgraphs, the scheduling time will decrease from $|V|\times2^{|V|}$ to $|V|\times2^{|V|/N}$ that we can speed up scheduling by multiple orders of magnitude compared to the naive approach, depending on the size of the graph and the number of partitions.

As such, Figure~\ref{fig:dnc} shows this insight can be extended to our problem setting, where we can first perform scheduling on each cell and merge those solutions together to form the final solution.
First, stage is partitioning the original graph $\mathcal{G}$ into multiple subgraphs $g$ (\emph{divide}).
Then, utilizing the independence among the subgraphs, each subgraph $g$ can be scheduled separately for their corresponding optimal schedule $s_g$ (\emph{conquer}).
Considering that the number of nodes in the subgraph $g$ is much smaller than the entire graph $\mathcal{G}$, the scheduling time will decrease significantly.
Finally, the schedules of the subgraphs are concatenated to give optimal schedule $s^*$ of the entire graph (\emph{combine}).

\begin{figure}[b!]
	\centering
	%\vspace{-2ex} % FORMATTING
	\subfigure[While both path $s_1$ and $s_2$ schedules lead to same $z'$, their $\mu$ and $\mu_{peak}$ varies and we can prune schedules that yield higher $\mu_{peak}$ than a given budget $\tau$. Numbers next to box or circle are $\mu$ and numbers next to edges are $\mu_{peak}$]{\includegraphics[width=0.9\linewidth]{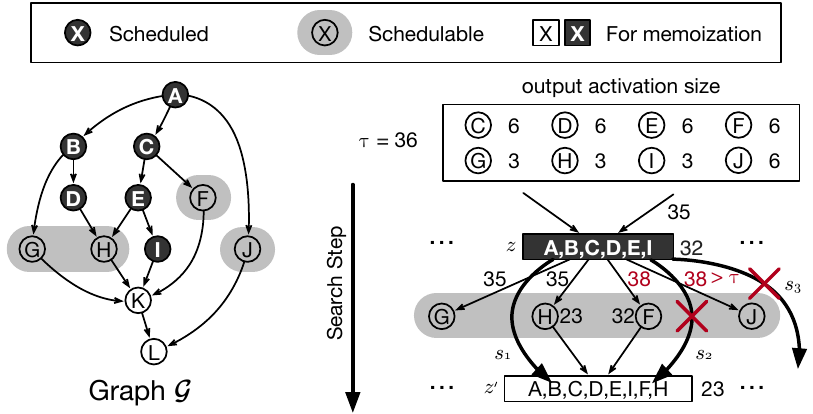}\label{fig:pruning_paths}}
	\hfill
	\subfigure[\emph{Adaptive soft budgeting} starts by setting a hard budget $\tau_{max}$ as the maximum value for the soft budget $\tau$. Then, conducts a binary search for $\tau$, higher than $\tau^*$ that it finds a solution yet not too high that scheduling completes quickly.]{\includegraphics[width=0.9\linewidth]{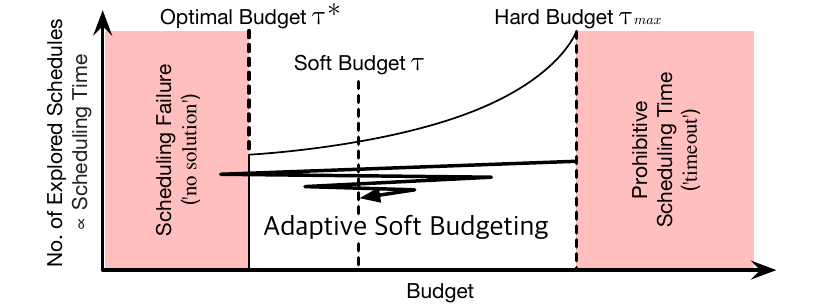}\label{fig:pruning_threshold}}
	\vspace{-2ex}
	\caption{Illustration of the \emph{adaptive soft budgeting}. (a) shows how schedules are pruned, and (b) illustrates how the \emph{soft budget} $\tau$ relates to the number of explored schedules.}
	%\vspace{-2ex}
	\label{fig:pruning}
\end{figure}

\paragraph{Adaptive soft budgeting.}
While divide-and-conquer approach scales down the number of nodes, the algorithm may still not be fast enough due to the exponential complexity of the algorithm.
Therefore, we explore avoiding suboptimal solutions during the early stage of scheduling without affecting the optimality of the original algorithm.
Since our goal is to find a \emph{single} solution that can run within a given \emph{memory budget} $\tau^* = \mu^*$ while all other solutions can be discarded, setting some budget $\tau$ that is greater or equal to $\mu^*$ and pruning suboptimal schedules with which their $\mu_{peak}$ exceeds $\tau$ can focus the search to a smaller search space $\mathcal{S}_T' \subset \mathcal{S}_T$ while still achieving the optimal schedule $s^*$.
On top of this, we develop a \emph{meta-search} for $\tau$.
This is inspired from engineers buying a larger memory (increase $\tau$) if a program fails due to stack overflow (= \textsf{'no solution'} due to an overly aggressive pruning) and selling out excess memory (decrease $\tau$) if the current budget is prohibitive (= \textsf{'timeout'} due to lack of pruning).
\name takes advantage of this insight to develop an \emph{adaptive soft budgeting} scheme while scheduling to cut down the overall number of explored schedules.
Figure~\ref{fig:pruning} illustrates the overall idea by first showing how some schedules are pruned with regard to a given budget $\tau$ in Figure~\ref{fig:pruning_paths} then implication of different $\tau$ on scheduling time in  Figure~\ref{fig:pruning_threshold}.

\begin{algorithm}[b!]
	\caption{Adaptive Soft Budgeting}
	\label{alg:pruning}
\begin{algorithmic}[1]
   \STATE {\bfseries Input:} graph $\mathcal{G}$
   \STATE {\bfseries Output:} optimal schedule $s^*$
   \STATE $\tau_{max} \leftarrow \mu($Kahn'sAlgorithm$(\mathcal{G}), \mathcal{G})$ // hard budget
   \STATE $\tau_{old}, \tau_{new} \leftarrow \tau_{max}$
   \STATE $flag \leftarrow$ \textsf{'no solution'}%, $count$ = 0 // count 'no solution'
   \REPEAT
   \STATE // binary search for $\tau$: decrease $\tau$ if 'timeout' 
   \STATE // and increase $\tau$ if \textsf{'no solution'}
   \IF{$flag$ {\bfseries is} \textsf{'timeout'}}
   \STATE // simultaneous
   \STATE $\tau_{old} \leftarrow \tau_{new}$, $\tau_{new} \leftarrow \tau_{new} / 2$
   %\STATE $count$ = 0
   \ELSIF{$flag$ {\bfseries is} \textsf{'no solution'}}
   \STATE // simultaneous
   \STATE $\tau_{old} \leftarrow \tau_{new}$, $\tau_{new} \leftarrow (\tau_{new} + \tau_{old}) / 2$
   %\STATE $count$ = $count + 1$
   \ENDIF
   %\IF{$count \geq 3$}
   %\STATE $\tau_{old}$ = $\tau_{max}$, $\tau_{new}$ = $\tau_{max}$$T$ = $T \times 2$
   %\STATE $count$ = 0
   %\ENDIF
   \IF{$flag$  {\bfseries is} \textsf{'solution'}}
   \STATE $s^* \leftarrow schedule$ // optimal schedule
   \ENDIF
   \UNTIL{$flag$ {\bfseries is} \textsf{'solution'}}
\end{algorithmic}
\end{algorithm}

Figure~\ref{fig:pruning_paths} depicts a certain point while scheduling $\mathcal{G}$, where nodes \circlewhite{\textsf{\size{8}{G}}}, \circlewhite{\textsf{\size{8}{H}}}, \circlewhite{\textsf{\size{8}{F}}}, and \circlewhite{\textsf{\size{8}{J}}} can be scheduled.
In particular, the figure compares two possible solutions $s_1$ and $s_2$ which schedules \circlewhite{\textsf{\size{8}{H}}}$\rightarrow$\circlewhite{\textsf{\size{8}{F}}} and \circlewhite{\textsf{\size{8}{F}}}$\rightarrow$\circlewhite{\textsf{\size{8}{H}}}, respectively given $\tau=36$.
While $s_1$ and $s_2$ both starts from $z$ with $\mu=32$, scheduling \circlewhite{\textsf{\size{8}{H}}} leads to $\mu_{peak} = 32 + 3$ (H) $= 35$, whereas scheduling \circlewhite{\textsf{\size{8}{F}}} or \circlewhite{\textsf{\size{8}{J}}} leads to $\mu_{peak} = 32 + 6$ (F or J) $= 38$. 
Therefore, since we assume $\tau = 36$, $s_2$ and $s_3$ will fail because $\mu_{peak} = 38$ for $s_2$  and $s_3$ exceeds $36$.
So, as long as we set the budget $\tau$ higher than $\mu^*$, the scheduler still finds a single optimal solution while avoiding many suboptimal paths.
On the other hand, too small a $\tau < \mu^*$ leads to no solution because the optimal path would be pruned away.

Having established the possibility of pruning, our question boils down to discovering $\tau$ that is greater or equal to $\mu^*$ which we call an \emph{optimal budget} $\tau^*$, yet close enough to shrink the search space effectively.
Figure~\ref{fig:pruning_threshold} and Algorithm~\ref{alg:pruning} summarizes the proposed \emph{adaptive soft budgeting}.
Since we start with no information about the approximate range for $\tau$, we resort to a commonly used topological ordering algorithm called Kahn's algorithm~\cite{kahn:1962} ($\mathcal{O}(|V|+|E|)$) to adaptively gain idea of the range for $\tau$.
We use the peak memory footprint from this sequence and use it as our \emph{hard budget} $\tau_{max}$, and in contrast we call adaptively changing $\tau$ as a \emph{soft budget}.
Since $\tau_{max} \geq \mu^*$, we know that any $\tau \ge \tau_{max}$ do not need to be explored.
Having this upper bound for the search, \emph{adaptive soft budgeting} implements a binary search to first run the scheduling algorithm with $\tau$ and $T$ as input, where $T$ is an hyperparameter that limits the scheduling time per search step.
The binary search increases $\tau$ ($\tau_{new} \leftarrow (\tau_{new} + \tau_{old}) / 2$) if it finds \textsf{'no solution'} and decreases $\tau$ ($\tau_{new} \leftarrow \tau_{new} / 2$) if a search step returns \textsf{'timeout'} (search step duration exceeds $T$).
The binary search stops as soon as it finds a schedule (\textsf{'solution'}), and this method using binary search is guaranteed to work due to the monotonically increasing number of explored schedules with $\tau$.

\begin{figure}[t!]
	\centering	
	%\vspace{-2ex} % FORMATTING
	\includegraphics[width=0.9\linewidth]{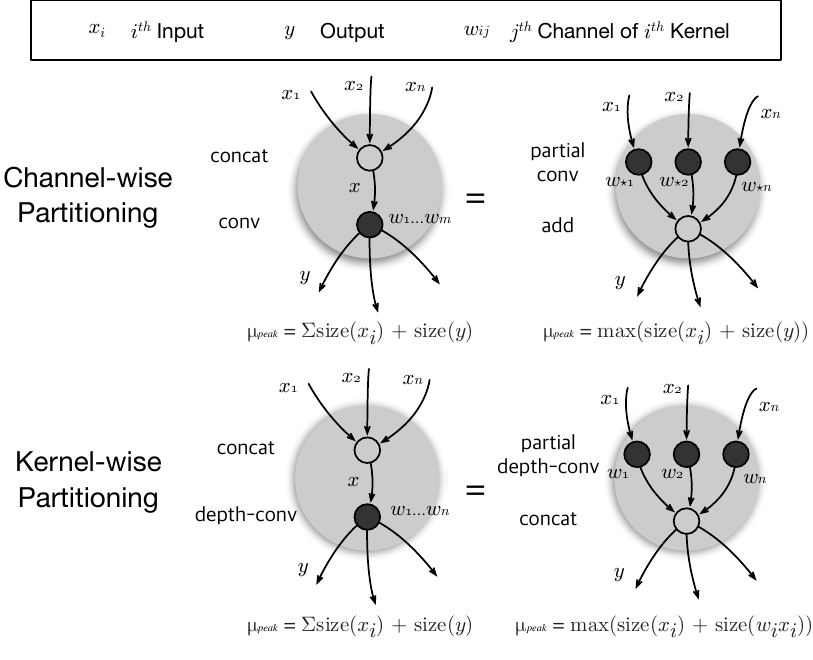}
	\vspace{-2ex}
	\caption{Illustration of the graph rewriting patterns: \emph{channel-wise partitioning} and \emph{kernel-wise partitioning} can reduce the memory cost of convolution and depthwise convolution respectively.}
	\vspace{-2ex}
	\label{fig:rewriting}
\end{figure}

\subsection{Identity Graph Rewriting: Improving the Search Space for Better Peak Memory Footprint}
\label{sec:graph_rewriting}
%
%Certain \emph{patterns} within the computational graph of irregularly wired neural networks lead to \emph{high memory costs}.
Reorganizing the computational graph of the irregularly wired neural networks may lead to \emph{significant reduction in the peak memory footprint} $\mu_{peak}$ during computation.
%
% https://stats.stackexchange.com/questions/361018/when-to-add-layers-and-when-to-concatenate-in-neural-networks
For example, it is notable that large stream of NAS-based works~\cite{darts:2018,swiftnet:2019} rely on extensive use of \emph{concatenation} as a natural approach to merge information from multiple branches of the input activations and expand the search space of the neural architectures.
However, concatenation with many incoming edges may prolong the liveness of the input activation and increase the memory pressure, which is unfavorable especially for resource constrained scenarios.
%
%We notice that such patterns with high memory costs can be substituted for patterns with same output but with less memory costs.
%
%Therefore, we propose a \emph{graph rewriting technique} that leverages this computational aspect of network architectures that, while the mechanism is simple, brings significant gains in terms of peak memory consumption.
To address this issue, we propose \emph{identity graph rewriting} to effectively reduce $\mu_{peak}$ around the concatenation while keeping the arithmetic outputs identical.
To this end, we present two main examples of the graph patterns in irregularly wired neural networks that benefits from our technique:

\paragraph{Channel-wise partitioning (convolution).}
One typical pattern in irregularly wired neural networks is \emph{concatenation} (\emph{concat}: $[\cdot]$) that takes multiple branches of the input prior to a \emph{convolution} (\emph{conv}: $*$).
%
%During the execution of such pattern, \emph{peak memory} consumption occurs when the concatenated $x$ is inside the memory and the result $y$ additionally gets saved to the memory before $x$ is deallocated.
While executing such pattern, peak memory footprint $\mu_{peak}$ occurs when the output $y \in \mathbb{R}^n$ is being computed while concatenated branches of input $x \in \mathbb{R}^n$ are also mandated to reside in the memory.
Our objective is to achieve the same arithmetic results and logical effect as \emph{concat} yet sidestep the corresponding seemingly excessive memory cost.
To this end, we \emph{channel-wise partition} the \emph{conv} that follows the \emph{concat} so that the \emph{partitioned conv} can be computed as soon as the input $x_i$ becomes available.
Equation~\ref{eq:partial_conv_begin}-\ref{eq:partial_conv_end} detail the mathematical derivation of this substitution.
Specifically, as shown in Equation~\ref{eq:partial_conv_begin}, each kernel iterates and sums up the result of convolving channels in conv.
However, using the \emph{distributive property} of $\sum_i$ and $*$, these transform to summation of \emph{channel-wise partitioned convolution}, which we call \emph{partial conv}.
This \emph{partial conv} removes \emph{concat} from the graph leading to lower \emph{memory cost}.
As illustrated in Figure~\ref{fig:rewriting}, the memory cost of same computation reduces from $\sum x_i + y$ to $max(w_{\star i}*x_i) + y$, which becomes more effective when there are more incoming edges to concat.

\vspace{-2ex} % FORMATTING
\begin{align}
%x &= [x_1, x_2, ..., x_n] \quad \text{(concat)} \\
\label{eq:partial_conv_begin}
y &= \Big[\sum_i w_{1i} * x_i, ..., \sum_i w_{mi} * x_i\Big] \;\; \text{(concat+conv)} \\
  &= \sum_i \Big[w_{1i} * x_i, ..., w_{mi} * x_i\Big] \\
  &= \sum_i \Big[w_{1i}, ..., w_{mi}\Big] * x_i \\
  \label{eq:partial_conv_end}
  &= \sum_i \Big[w_{\star i} * x_i\Big] \qquad\qquad\qquad \text{(partial conv+add)}
\end{align}

\paragraph{Kernel-wise partitioning (depthwise convolution).}
\emph{Depthwise convolution} (\emph{depthconv})~\cite{depthwise:2014,mobilenet:2017} has been shown to be effective in reducing computation yet achieve competitive performance, hence its wide use in networks that target extreme efficiency as its primary goal.
For \emph{concatenation} (\emph{concat}) followed by a \emph{depthwise convolution} (\emph{depthconv}), similar to above \emph{concat+conv} case, peak memory footprint $\mu_{peak}$ occurs when the concatenated $x$ is inside the memory and the result $y$ additionally gets saved to the memory before $x$ is deallocated.
This time, we leverage the \emph{independence} among different kernels to \emph{kernel-wise partition} the \emph{depthconv} that follows the \emph{concat} so that each input $x_i$ is computed to smaller feature maps without residing in the memory too long.
As such, Equation~\ref{eq:partial_depthconv_begin}-\ref{eq:partial_depthconv_end} derives this substitution.
Equation~\ref{eq:partial_depthconv_begin} shows that every component in the $y$ is \emph{independent} (different subscript index) and is viable for partitioning.
In other words, this rewriting simply exposes the \emph{commutative property} between \emph{depthconv} and \emph{concat} plus \emph{kernel-wise partitioning} to reduce $\mu_{peak}$ significantly.

\vspace{-2ex} % FORMATTING
\begin{align}
%x &= [x_1, x_2, ..., x_n] \quad \text{(concat)} \\
\label{eq:partial_depthconv_begin}
y &= \Big[w_1 * x_1, ..., w_n * x_n\Big] \qquad \text{(concat+depthconv)} \\
  \label{eq:partial_depthconv_end}
  &= \Big[[w_1 * x_1], ..., [w_n * x_n]\Big] \;\;\, \text{(partial depthconv+concat)}
\end{align}

\paragraph{Implementation.}
Following the general practice of using pattern matching algorithms in compilers~\cite{llvm:2004,glow:2018,graph_substitution:2019}, we implement \emph{identity graph rewriting} using pattern matching to identify regions of the graph which can be substituted to an operation with \emph{lower computational cost}.
Likewise, we make use of this technique to identify regions that leads to \emph{lower memory cost}.

\section{Evaluation}
\label{sec:evaluation}

We evaluate \name with four representative \emph{irregularly wired neural networks} graphs.
We first compare the peak memory footprint of \name against TensorFlow Lite~\cite{tflite} while using the same linear memory allocation scheme\footnote{TensorFlow Lite implements a linear memory allocator named simple memory arena: \url{https://github.com/tensorflow/tensorflow/blob/master/tensorflow/lite/simple_memory_arena.cc}} for both.
Furthermore, we also experiment the impact of such peak memory footprint reduction on off-chip memory communication.
%
%We compare \name and TensorFlow Lite using 2.3GHz Intel Core i5 with 8GB 2133MHz LPDDR3.
%
We also conduct an in-depth analysis of the gains from the proposed \emph{dynamic programming-based scheduler} and \emph{graph rewriting} using SwiftNet Cell A~\cite{swiftnet:2019}.
Lastly, we study the impact of \emph{adaptive soft budgeting} on the scheduling time.

\subsection{Methodology}
\label{sec:eval_method}
\begin{table}[t!]
\vspace{-2ex} % FORMATTING
\caption{Specification of the networks used for evaluation.}
\label{tab:networks}
\vspace{-2ex} % FORMATTING
\begin{center}
\begin{scriptsize}
\begin{sc}
\begin{tabular}{lccccc}
\toprule
Network      & Type   & Dataset & \# MAC & \# Weight & Top-1    \\
             &        &         &        &           & Accuracy \\
\midrule
DARTS        & NAS    & ImageNet & 574.0M &   4.7M  & 73.3\% \\
SwiftNet     & NAS    & HPD      &  57.4M & 249.7K  & 95.1\% \\
RandWire     & Rand   & CIFAR10  & 111.0M &   1.2M  & 93.6\% \\
RandWire     & Rand   & CIFAR100 & 160.0M &   4.7M  & 74.5\% \\
\bottomrule
\end{tabular}
\end{sc}
\end{scriptsize}
\end{center}
\vspace{-4ex} % FORMATTING
\end{table}

\paragraph{Benchmarks and datasets.}
Table~\ref{tab:networks} lists the details of the networks--representative of the irregularly wired neural networks from Neural Architecture Search (\textsc{NAS}) and Random Network Generators (\textsc{Rand})--used for evaluation: DARTS~\cite{darts:2018} for ImageNet, SwiftNet~\cite{swiftnet:2019} for a dataset comprised of human presence or absence (HPD), and RandWire~\cite{randomnet:2019} for CIFAR10 and CIFAR100.
DARTS~\cite{darts:2018} is a gradient-based NAS algorithm.
In particular we focus on the learned normal cell for image classification on ImageNet: only the first cell because it has the highest peak memory footprint and the reset of the network is just repeated stacking of the same cell following the practice in NASNet~\cite{nasnet:2018}.
SwiftNet~\cite{swiftnet:2019} is network from NAS by targeting human detection dataset.
RandWire~\cite{randomnet:2019} are from Random Network Generators for image classification on CIFAR10 and CIFAR100.
The table also lists their dataset, multiply-accumulate count (\# \textsc{MAC}), number of parameters (\# \textsc{Weight}), and top-1 accuracy on their respective dataset.
%
%	Due to the space limitation, we present a more detailed specification and the topology of each network in the supplementary material.

\subsection{Experimental Results}
\label{sec:eval_results}

\paragraph{Comparison with TensorFlow Lite.}
Figure~\ref{fig:eval_overall} evaluates \name over TensorFlow Lite on different cells of the aforementioned networks in terms of reduction in memory footprint.
The figures illustrate that \name's dynamic programming-based scheduler reduces the memory footprint by a factor of \DPPeakImprovement without any changes to the graph.
In addition, the proposed graph rewriting technique yields an average of \PConvPeakImprovement (extra \PConvPeakExtraImprovement) reduction in terms of peak memory footprint.
The results suggest that \name yields significant reduction in terms of the peak memory footprint for irregularly wired neural networks.

\begin{figure}[t!]
	\centering
	\vspace{-1ex} % FORMATTING
	\includegraphics[width=0.9\linewidth]{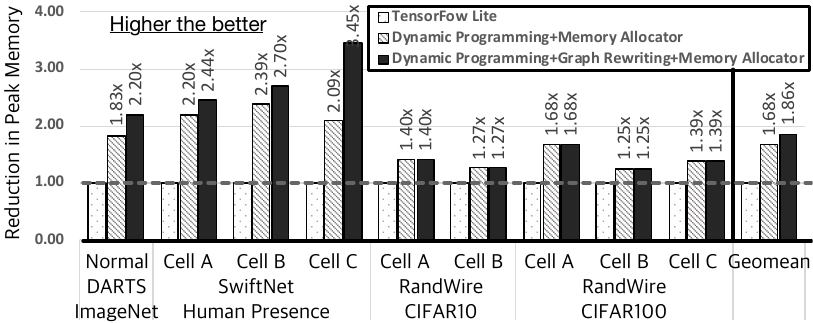}\label{fig:eval_memory}
	%\subfigure[Reduction in terms of peak memory footprint.]{\includegraphics[width=0.95\linewidth]{figs/eval_all_edit.pdf}\label{fig:eval_memory}}
	%\\[-1ex]
	%\subfigure[Comparison of static scheduling time.]{\includegraphics[width=0.95\linewidth]{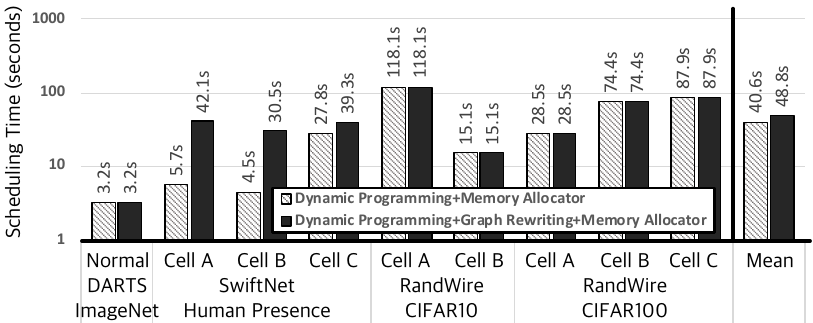}\label{fig:eval_time}}
	\vspace{-2ex}
	\caption{Reduction in peak memory footprint of \name against TensorFlow Lite (no memory hierarchy).}
	\vspace{-2ex}
	\label{fig:eval_overall}
\end{figure}

\begin{figure}[t!]
	\centering
	%\vspace{-2ex} % FORMATTING
	\includegraphics[width=0.9\linewidth]{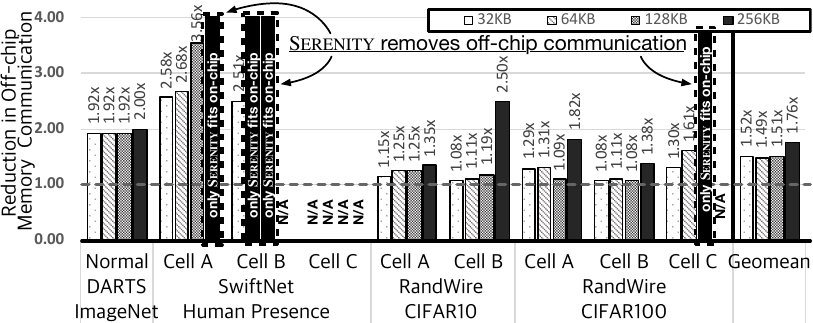}\label{fig:eval_memory}
	\vspace{-2ex}
	\caption{Reduction in off-chip memory communication of \name against TensorFlow Lite (with memory hierarchy).}
	\vspace{-3ex}
	\label{fig:eval_offchip}
\end{figure}

\paragraph{Improvement in off-chip memory communication.}
We also show how \name affects the off-chip memory communication, which largely affects both power and inference speed~\cite{eyeriss:2016,tetris:2017,bitfusion:2018}.
To this end, Figure~\ref{fig:eval_offchip} sweeps different on-chip memory configurations to measure the reduction in off-chip communication on systems with multi-level memory hierarchy.
Since we know the entire schedule \emph{a priori}, we use Belady's optimal algorithm~\cite{belady:1966}, also referred to as the clairvoyant algorithm for measuring the off-chip memory communication, to distill the effects of the proposed scheduling.
The results show that \name can reduce the off-chip memory communication by \OffChipImprovement for a device with \textsf{256KB} on-chip memory.
In particular, while there were few cases where peak memory footprint was already small enough to fit on-chip (\textsf{N/A} in figure), there were some cases where \name eradicated the off-chip communication by successfully containing the activations in the on-chip memory while TensorFlow Lite failed to do so (marked in figure).
This suggests that \name's effort of reducing memory footprint is also effective in reducing the off-chip memory communication in systems with memory hierarchy, hence the power consumption and inference speed.

\paragraph{Improvement from dynamic programming-based scheduler and identity graph rewriting.}
To demonstrate where the improvement comes from, Figure~\ref{fig:eval_footprint_reason} plots the memory footprint while running Swiftnet Cell A.
Figure~\ref{fig:eval_ma_footprint} shows the memory footprint of \name with the memory allocation.
The figure shows that \name's dynamic programming-based scheduler brings significant improvement to the peak memory footprint (\textsf{551.0KB$\rightarrow$250.9KB}), and the graph rewriting further improves this by \textsf{25.1KB} (\textsf{250.9KB$\rightarrow$225.8KB}) by utilizing patterns that alleviate regions with large memory footprint.
In order to focus on the effect of the scheduler and graph rewriting, Figure~\ref{fig:eval_footprint} presents the memory footprint of \name without the memory allocation: the sum of the activations while running the network.
The figure shows that the proposed scheduler finds a schedule with the optimal (minimum) peak memory footprint without changes to the graph.
Then, it shows that the proposed graph rewriting can further reduce the peak memory footprint by \textsf{12.5KB} (\textsf{200.7KB$\rightarrow$188.2KB}).
The results suggest that the significant portion of the improvement comes from the proposed dynamic programming-based scheduler and the graph rewriting.

\begin{figure}[t!]
	\centering
	\vspace{-1ex} % FORMATTING
	\subfigure[Memory footprint with the memory allocator (peak memory footprint of TensorFlow Lite = \textsf{551.0KB}).]{\includegraphics[width=0.9\linewidth]{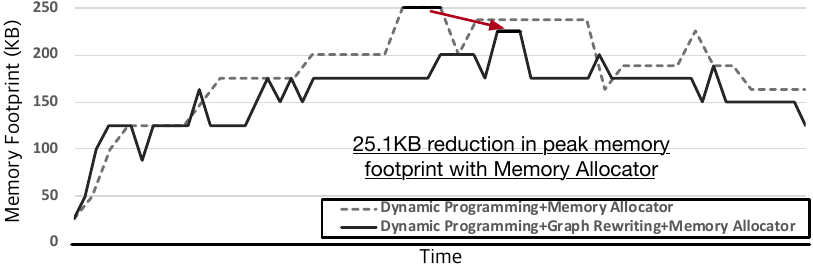}\label{fig:eval_ma_footprint}}
	\\[-0.5ex]
	\subfigure[Memory footprint without the memory allocator.]{\includegraphics[width=0.9\linewidth]{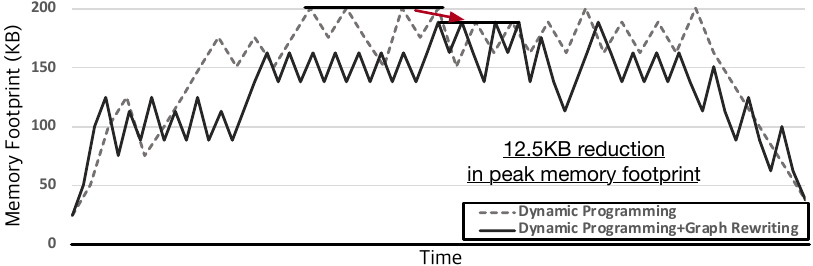}\label{fig:eval_footprint}}
	\vspace{-2ex}
	\caption{Memory footprint while running SwiftNet Cell A with and without the memory allocator (\emph{red arrow} denotes reduction).}
	\vspace{-3ex}
	\label{fig:eval_footprint_reason}
\end{figure}

\paragraph{Scheduling time of \name.}
Figure~\ref{fig:eval_time} summarizes the (static) scheduling time taken for \name to schedule the networks.
Results show that the average scheduling time is \DPSchedulingLatency without the graph rewriting and \PConvSchedulingLatency with graph rewriting, which the difference comes from the increase in the number of nodes from graph rewriting.
The results show that all the above gains of \name come at the cost of less than one minute average extra compilation time. 
While the dynamic programming-based scheduling suffers from an exponential time complexity, \name manages to make the scheduling tractable through the proposed divide-and-conquer and adaptive soft budgeting.

\paragraph{Speed up from divide-and-conquer and adaptive soft budgeting.}
Table~\ref{tab:time} summarizes the scheduling time of SwiftNet~\cite{swiftnet:2019} for different algorithms to demonstrate the speed up from divide-and-conquer and adaptive soft budgeting techniques.
As such, the table lists different combination of algorithms, number of nodes, and the corresponding scheduling time.
Straightforward implementation of the aforementioned \circleblack{1} \emph{dynamic programming-based scheduling} leads to an immeasurably large scheduling time regardless of the \emph{graph rewriting}.
However, additional application of the \circleblack{2} \emph{divide-and-conquer} (\circleblack{1}+\circleblack{2}) leads to a measurable scheduling time: \textsf{56.53 secs} and \textsf{7.29 hours} to schedule without and with the graph rewriting, respectively.
Furthermore, we observe that further applying \circleblack{3} \emph{adaptive soft budgeting} (\circleblack{1}+\circleblack{2}+\circleblack{3}) significantly reduces the scheduling time \textsf{37.9 secs} and \textsf{111.9 secs} to schedule without and with the graph rewriting, respectively.
Above results indicate that applying the proposed algorithms leads to a scheduling time of practical utility.

\begin{figure}[t!]
	\centering
	\vspace{-1ex} % FORMATTING
	\includegraphics[width=0.95\linewidth]{}
	\vspace{-2ex}
	\caption{Scheduling time evaluation for \name.}
	\vspace{-3ex}
	\label{fig:eval_time}
\end{figure}

\begin{table}
%\vspace{-2ex} % FORMATTING
\caption{Comparison of the scheduling time for different algorithms to schedule SwiftNet. \circleblack{1}, \circleblack{2}, and \circleblack{3} represent \emph{dynamic programming}, \emph{divide-and-conquer}, and \emph{adaptive soft budgeting} respectively. \textsf{N/A} denotes infeasible within practical time.}
\label{tab:time}
%\vspace{-2ex} % FORMATTING
\begin{center}
\begin{scriptsize}
\begin{sc}
\begin{tabular}{lllrr}
\toprule
Graph     & Algorithm  & \# Nodes and & \multicolumn{1}{l}{Scheduling} \\
Rewriting &            & Partitions &   \multicolumn{1}{l}{Time}             \\
\midrule
\multicolumn{1}{c}{\xmark} &\circleblack{1} & 62 =\{62\}       & \textsf{N/A}    \\
\multicolumn{1}{c}{\xmark} &\circleblack{1}+\circleblack{2}        & 62=\{21,19,22\} & \textsf{56.5 \textup{secs}}                    \\
\multicolumn{1}{c}{\xmark} &\circleblack{1}+\circleblack{2}+\circleblack{3}     & 62=\{21,19,22\} & \textbf{\textsf{37.9 \textup{secs}}}           \\
\midrule
\multicolumn{1}{c}{\cmark} &\circleblack{1}        & 92=\{92\}       & \textsf{N/A}    \\
\multicolumn{1}{c}{\cmark} &\circleblack{1}+\circleblack{2}     & 92=\{33,28,29\} & \textsf{7.2 \textup{hours}}      \\
\multicolumn{1}{c}{\cmark} &\circleblack{1}+\circleblack{2}+\circleblack{3}  & 92=\{33,28,29\} & \textbf{\textsf{111.9 \textup{secs}}}         \\
\bottomrule
\end{tabular}
\end{sc}
\end{scriptsize}
\end{center}
\vspace{-2ex} % FORMATTING
\end{table}

% https://www.geeksforgeeks.org/sum-product-r-rth-binomial-coefficient-r-ncr/
% \sum E / V * n * VCn
% = E / V * V * 2^(V-1)
% = E * 2^(V-1)

% compare for SwiftNet Cell-A
% 0. full recursion 10^10 --> ~150 days
% 1. dnc x + threshold 250kb --> 
% 2. dnc o + threshold 250kb --> 559.819s
% 3. dnc o + binary search --> 38.450s

%%%%% TIME %%%%%
%%%%% SWIFTNET (VWWC) - swiftnet_dp_vwwc.log %%%%%
%                 SWIFT-A  SWIFT-B  SWIFT-C
% DNC o T=x        313600   128576    38220
% DNC o T=kahn    39197.8    501.6    892.2
% DNC o T=dfs
% DNC o T=search   50.114   44.555   11.896

%%%%% RANDWIRE (CIFAR10) - randwire_dp_cifar10.log %%%%%
%                 RANDW-1  RANDW-2
% DNC o T=x        54.401    4.747
% DNC o T=kahn     54.150    4.309
% DNC o T=dfs
% DNC o T=search   54.150    4.309

%%%%% RANDWIRE (CIFAR100) - randwire_dp_cifar100.log %%%%%
%                 RANDW-1  RANDW-2  RANDW-3
% DNC o T=x        43.207    9.314   31.075
% DNC o T=kahn     26.317   24.389   35.156
% DNC o T=dfs
% DNC o T=search   26.317   24.389   35.156
\section{Related Works}
\label{sec:related}

The prevalence of neural networks has led to the development of several compilation frameworks for deep learning~\cite{tensorflow:2016,pytorch:2017,glow:2018,ngraph:2018}.
However, even industry grade tools, mostly focus on tiling and fine-grained scheduling of micro-operations on the conventional hardware~\cite{tensorrt,tflite} or accelerators~\cite{eyeriss:2016,dadiannao:2014,eie:2016,stripes:2016,tpu:2017,tetris:2017,scnn:2017,bitfusion:2018,brainwave:2018}.
However, these framework are mostly designed for the common regular patterns that have dominated deep learning from almost its conception.
As such, these tools inherently had no incentive to deal with the form of irregularities that the emerging NAS~\cite{nas:2016,adanet:2017,nasnet:2018,darts:2018,proxylessnas:2018,amoebanet:2018,swiftnet:2019} and Random Networks~\cite{randomnet:2019,wirings:2019} bring about.
This paper, in contrast, focuses on this emergent class that breaks the regularity convention and aims to enable their execution on memory constrained edge devices.

\paragraph{Scheduling and tiling for neural networks.}
While prior works on scheduling~\cite{vliw_scheduling:2013,dp_optimal:2001,ilp_optimal:2000} focus on classical computing workloads, there have been limited study about the implications of scheduling in the neural networks domain.
There is also a significant body of work on scheduling operations on hardware accelerators~\cite{dla:2018} that also considers tiling~\cite{tvm:2018,tc:2018,pbqp:2019,chameleon:2020}.
However, graph scheduling  for irregularly wired neural network, specially with memory constraints, is an  emerging problem, which is the focus of this paper.

\paragraph{Graph rewriting for neural networks.}
It has been a common practice to rewrite parts of the graph using rule-based~\cite{tensorflow:2016,pytorch:2017,glow:2018,ngraph:2018,tensorrt} or systematic approaches to expose parallelism and make models more target-aware~\cite{hidden_dimension:2018,graph_substitution:2019,hw_dependent_rewriting:2007}.
While these approaches may alleviate the complexity of the graph and reduce the peak memory footprint as a side effect, these frameworks do not explore and are not concerned with scheduling.
Our work exclusively explores graph rewriting in the context of improving the peak memory footprint.

\paragraph{Optimizing neural networks.}
There are different optimization techniques that aim to simplify the neural network indifferent dimensions.
Sparsification/compression~\cite{brain_damage:1990,learning_connections:2015,prune:2017,structured_pruning:2017}, 
quantization~\cite{deep_compression:2015,bnn:2016,dorefa:2016,apprentice:2017,lsq:2019}, 
activation compression~\cite{gist:2018},
and kernel modifications reduce the complexity of the individual operations or remove certain computations.
However, our focus, the problem of memory-aware graph scheduling still remains orthogonal to these inspiring efforts.
%

%\section{Discussion and Future Works}
%\label{sec:discussion}
%%
%While \name explores dynamic programming to find an optimal schedule for an \emph{irregularly wired neural networks} regarding the memory footprint, the work can be extended to other constraints such as inference speed.
%%
%For example while speeding up inference was addressed in inspiring works such as \cite{autotvm:2018,pbqp:2019}, the proposed algorithm can be beneficial to accelerating or optimizing the performance of these works.
%%
%Furthermore, this opens up new opportunities to integrate machine learning-based cost modeling, transfer learning, and reinforcement learning to alleviate the complexity by learning from the optimal methods.
%%
%Another direction is developing more comprehensive set of graph patterns to address various constraints.

% - Dynamic programming formulation from different aspects such as performance as explored in...
% - Machine learning cost models or other methods to develop a $\mathcal{O}(|V|+|E|)$ algorithm that can find schedules faster than the proposed algorithm with exponential complexity
% - More exploration towards various graph rewriting patterns.
% - NAS feedback (controversial)

\section{Conclusion}
\label{sec:conclusion}
As the new forms of connectivity emerges in neural networks, there is a need for system support to enable their effective use, specially for intelligence at the edge.
This paper took an initial step toward orchestrating such network under stringent physical memory capacity constraints.
We devised signatures to enable dynamic programming and adaptive soft budgeting to make the optimization tractable.
Even more, an identity graph writing was developed to further the potential for gains.
The encouraging results for a set of emergent networks suggest that there is significant potential for compiler techniques that enables new forms of intelligent workloads.
%
%
%%
%This paper explores the memory-aware scheduling of irregularly wired neural networks.
%%
%We leverage dynamic programming to find a schedule with optimal memory footprint without graph change, then we further push the frontier by devising a graph rewriting technique that changes the graph without affecting the mathematical integrity of the network.
%%
%Evaluation of our solution demonstrates significant improvement in peak memory footprint, even enabling deployment of such irregular topology models to devices with restricted memory.
%%
%This encouraging results suggest significant potential for such scheduling and graph rewriting techniques to foster application of irregularly wired neural networks in the existing deep learning compilers. 
\section*{Acknowledgement}
We thank the anonymous reviewers for their insightful comments.
We also thank Harris Teague and Jangho Kim for the fruitful discussions and feedbacks on the manuscript, and Parham Noorzad for his help with the mathematical formulations to calculate the complexity of the algorithms.

{
\footnotesize
\bibliographystyle{formatting/mlsys2020}
\bibliography{ref_camera}

\begin{thebibliography}{66}
\providecommand{\natexlab}[1]{#1}
\providecommand{\url}[1]{\texttt{#1}}
\expandafter\ifx\csname urlstyle\endcsname\relax
  \providecommand{\doi}[1]{doi: #1}\else
  \providecommand{\doi}{doi: \begingroup \urlstyle{rm}\Url}\fi

\bibitem[Abadi et~al.(2016)Abadi, Barham, Chen, Chen, Davis, Dean, Devin,
  Ghemawat, Irving, Isard, et~al.]{tensorflow:2016}
Abadi, M., Barham, P., Chen, J., Chen, Z., Davis, A., Dean, J., Devin, M.,
  Ghemawat, S., Irving, G., Isard, M., et~al.
\newblock Tensorflow: A system for large-scale machine learning.
\newblock In \emph{OSDI}, 2016.

\bibitem[Abdelfattah et~al.(2018)Abdelfattah, Han, Bitar, DiCecco, O'Connell,
  Shanker, Chu, Prins, Fender, Ling, et~al.]{dla:2018}
Abdelfattah, M.~S., Han, D., Bitar, A., DiCecco, R., O'Connell, S., Shanker,
  N., Chu, J., Prins, I., Fender, J., Ling, A.~C., et~al.
\newblock {DLA}: Compiler and {FPGA} overlay for neural network inference
  acceleration.
\newblock In \emph{FPL}, 2018.

\bibitem[Ahn et~al.(2020)Ahn, Pilligundla, and Esmaeilzadeh]{chameleon:2020}
Ahn, B.~H., Pilligundla, P., and Esmaeilzadeh, H.
\newblock Chameleon: Adaptive code optimization for expedited deep neural
  network compilation.
\newblock In \emph{ICLR}, 2020.
\newblock URL \url{https://openreview.net/forum?id=rygG4AVFvH}.

\bibitem[Anwar et~al.(2017)Anwar, Hwang, and Sung]{structured_pruning:2017}
Anwar, S., Hwang, K., and Sung, W.
\newblock Structured pruning of deep convolutional neural networks.
\newblock \emph{JETC}, 2017.

\bibitem[Belady(1966)]{belady:1966}
Belady, L.~A.
\newblock A study of replacement algorithms for a virtual-storage computer.
\newblock \emph{IBM Systems Journal}, 1966.

\bibitem[Bellman(1966)]{dp:1966}
Bellman, R.
\newblock Dynamic programming.
\newblock \emph{Science}, 1966.

\bibitem[Bellman(1961)]{bellman:1961}
Bellman, R.~E.
\newblock Dynamic programming treatment of the traveling salesman problem.
\newblock 1961.

\bibitem[Bernstein et~al.(1989)Bernstein, Rodeh, and
  Gertner]{scheduling_complexity:1989}
Bernstein, D., Rodeh, M., and Gertner, I.
\newblock On the complexity of scheduling problems for parallel/pipelined
  machines.
\newblock \emph{TC}, 1989.

\bibitem[Bruno \& Sethi(1976)Bruno and Sethi]{codegen:1976}
Bruno, J. and Sethi, R.
\newblock Code generation for a one-register machine.
\newblock \emph{JACM}, 1976.

\bibitem[Cai et~al.(2019)Cai, Zhu, and Han]{proxylessnas:2018}
Cai, H., Zhu, L., and Han, S.
\newblock Proxyless{NAS}: Direct neural architecture search on target task and
  hardware.
\newblock In \emph{ICLR}, 2019.
\newblock URL \url{https://openreview.net/forum?id=HylVB3AqYm}.

\bibitem[Chen et~al.(2018)Chen, Moreau, Jiang, Zheng, Yan, Shen, Cowan, Wang,
  Hu, Ceze, et~al.]{tvm:2018}
Chen, T., Moreau, T., Jiang, Z., Zheng, L., Yan, E., Shen, H., Cowan, M., Wang,
  L., Hu, Y., Ceze, L., et~al.
\newblock Tvm: An automated end-to-end optimizing compiler for deep learning.
\newblock In \emph{OSDI}, 2018.

\bibitem[Chen et~al.(2014)Chen, Luo, Liu, Zhang, He, Wang, Li, Chen, Xu, Sun,
  et~al.]{dadiannao:2014}
Chen, Y., Luo, T., Liu, S., Zhang, S., He, L., Wang, J., Li, L., Chen, T., Xu,
  Z., Sun, N., et~al.
\newblock Dadiannao: A machine-learning supercomputer.
\newblock In \emph{MICRO}, 2014.

\bibitem[Chen et~al.(2016)Chen, Krishna, Emer, and Sze]{eyeriss:2016}
Chen, Y.-H., Krishna, T., Emer, J.~S., and Sze, V.
\newblock Eyeriss: An energy-efficient reconfigurable accelerator for deep
  convolutional neural networks.
\newblock \emph{JSSC}, 2016.

\bibitem[Cortes et~al.(2017)Cortes, Gonzalvo, Kuznetsov, Mohri, and
  Yang]{adanet:2017}
Cortes, C., Gonzalvo, X., Kuznetsov, V., Mohri, M., and Yang, S.
\newblock {AdaNet}: Adaptive structural learning of artificial neural networks.
\newblock In \emph{ICML}, 2017.

\bibitem[Courbariaux et~al.(2016)Courbariaux, Hubara, Soudry, El-Yaniv, and
  Bengio]{bnn:2016}
Courbariaux, M., Hubara, I., Soudry, D., El-Yaniv, R., and Bengio, Y.
\newblock Binarized neural networks: Training deep neural networks with weights
  and activations constrained to +1 or -1.
\newblock \emph{arXiv}, 2016.
\newblock URL \url{https://arxiv.org/pdf/1602.02830.pdf}.

\bibitem[Cyphers et~al.(2018)Cyphers, Bansal, Bhiwandiwalla, Bobba, Brookhart,
  Chakraborty, Constable, Convey, Cook, Kanawi, et~al.]{ngraph:2018}
Cyphers, S., Bansal, A.~K., Bhiwandiwalla, A., Bobba, J., Brookhart, M.,
  Chakraborty, A., Constable, W., Convey, C., Cook, L., Kanawi, O., et~al.
\newblock {Intel nGraph}: An intermediate representation, compiler, and
  executor for deep learning.
\newblock \emph{arXiv}, 2018.
\newblock URL \url{https://arxiv.org/pdf/1801.08058.pdf}.

\bibitem[Dean(2017)]{automl:2017}
Dean, J.
\newblock Machine learning for systems and systems for machine learning.
\newblock In \emph{NIPS Workshop on ML Systems}, 2017.

\bibitem[Elthakeb et~al.(2018)Elthakeb, Pilligundla, Yazdanbakhsh, Kinzer, and
  Esmaeilzadeh]{releq:2018}
Elthakeb, A.~T., Pilligundla, P., Yazdanbakhsh, A., Kinzer, S., and
  Esmaeilzadeh, H.
\newblock Releq: A reinforcement learning approach for deep quantization of
  neural networks.
\newblock \emph{arXiv}, 2018.
\newblock URL \url{https://arxiv.org/pdf/1811.01704.pdf}.

\bibitem[Esser et~al.(2020)Esser, McKinstry, Bablani, Appuswamy, and
  Modha]{lsq:2019}
Esser, S.~K., McKinstry, J.~L., Bablani, D., Appuswamy, R., and Modha, D.~S.
\newblock Learned step size quantization.
\newblock In \emph{ICLR}, 2020.
\newblock URL \url{https://openreview.net/forum?id=rkgO66VKDS}.

\bibitem[Feurer et~al.(2015)Feurer, Klein, Eggensperger, Springenberg, Blum,
  and Hutter]{automl:2015}
Feurer, M., Klein, A., Eggensperger, K., Springenberg, J., Blum, M., and
  Hutter, F.
\newblock Efficient and robust automated machine learning.
\newblock In \emph{NIPS}, 2015.

\bibitem[Fowers et~al.(2018)Fowers, Ovtcharov, Papamichael, Massengill, Liu,
  Lo, Alkalay, Haselman, Adams, Ghandi, et~al.]{brainwave:2018}
Fowers, J., Ovtcharov, K., Papamichael, M., Massengill, T., Liu, M., Lo, D.,
  Alkalay, S., Haselman, M., Adams, L., Ghandi, M., et~al.
\newblock A configurable cloud-scale dnn processor for real-time ai.
\newblock In \emph{ISCA}, 2018.

\bibitem[Gao et~al.(2017)Gao, Pu, Yang, Horowitz, and Kozyrakis]{tetris:2017}
Gao, M., Pu, J., Yang, X., Horowitz, M., and Kozyrakis, C.
\newblock {TETRIS}: Scalable and efficient neural network acceleration with 3d
  memory.
\newblock In \emph{ASPLOS}, 2017.

\bibitem[Gauen et~al.(2017)Gauen, Rangan, Mohan, Lu, Liu, and Berg]{lpirc:2017}
Gauen, K., Rangan, R., Mohan, A., Lu, Y.-H., Liu, W., and Berg, A.~C.
\newblock Low-power image recognition challenge.
\newblock In \emph{ASP-DAC}, 2017.

\bibitem[{Google}()]{tflite}
{Google}.
\newblock {TensorFlow Lite}.
\newblock URL \url{https://www. tensorflow.org/mobile/tflite}.

\bibitem[Han et~al.(2015)Han, Pool, Tran, and Dally]{learning_connections:2015}
Han, S., Pool, J., Tran, J., and Dally, W.
\newblock Learning both weights and connections for efficient neural network.
\newblock In \emph{NIPS}, 2015.

\bibitem[Han et~al.(2016{\natexlab{a}})Han, Liu, Mao, Pu, Pedram, Horowitz, and
  Dally]{eie:2016}
Han, S., Liu, X., Mao, H., Pu, J., Pedram, A., Horowitz, M.~A., and Dally,
  W.~J.
\newblock {EIE}: efficient inference engine on compressed deep neural network.
\newblock In \emph{ISCA}, 2016{\natexlab{a}}.

\bibitem[Han et~al.(2016{\natexlab{b}})Han, Mao, and
  Dally]{deep_compression:2015}
Han, S., Mao, H., and Dally, W.~J.
\newblock Deep compression: Compressing deep neural networks with pruning,
  trained quantization and huffman coding.
\newblock In \emph{ICLR}, 2016{\natexlab{b}}.

\bibitem[He et~al.(2018)He, Lin, Liu, Wang, Li, and Han]{amc:2018}
He, Y., Lin, J., Liu, Z., Wang, H., Li, L.-J., and Han, S.
\newblock {AMC}: {AutoML} for model compression and acceleration on mobile
  devices.
\newblock In \emph{ECCV}, 2018.

\bibitem[Held \& Karp(1962)Held and Karp]{held_karp:1962}
Held, M. and Karp, R.~M.
\newblock A dynamic programming approach to sequencing problems.
\newblock \emph{Journal of the SIAM}, 1962.

\bibitem[Howard et~al.(2017)Howard, Zhu, Chen, Kalenichenko, Wang, Weyand,
  Andreetto, and Adam]{mobilenet:2017}
Howard, A.~G., Zhu, M., Chen, B., Kalenichenko, D., Wang, W., Weyand, T.,
  Andreetto, M., and Adam, H.
\newblock {MobileNets}: Efficient convolutional neural networks for mobile
  vision applications.
\newblock \emph{arXiv}, 2017.
\newblock URL \url{https://arxiv.org/pdf/1704.04861.pdf}.

\bibitem[Jain et~al.(2018)Jain, Phanishayee, Mars, Tang, and
  Pekhimenko]{gist:2018}
Jain, A., Phanishayee, A., Mars, J., Tang, L., and Pekhimenko, G.
\newblock Gist: Efficient data encoding for deep neural network training.
\newblock In \emph{ISCA}, 2018.

\bibitem[Jia et~al.(2014)Jia, Shelhamer, Donahue, Karayev, Long, Girshick,
  Guadarrama, and Darrell]{caffe:2014}
Jia, Y., Shelhamer, E., Donahue, J., Karayev, S., Long, J., Girshick, R.,
  Guadarrama, S., and Darrell, T.
\newblock Caffe: Convolutional architecture for fast feature embedding.
\newblock In \emph{MM}, 2014.

\bibitem[Jia et~al.(2018)Jia, Lin, Qi, and Aiken]{hidden_dimension:2018}
Jia, Z., Lin, S., Qi, C.~R., and Aiken, A.
\newblock Exploring hidden dimensions in parallelizing convolutional neural
  networks.
\newblock In \emph{ICML}, 2018.

\bibitem[Jia et~al.(2019)Jia, Thomas, Warszawski, Gao, Zaharia, and
  Aiken]{graph_substitution:2019}
Jia, Z., Thomas, J., Warszawski, T., Gao, M., Zaharia, M., and Aiken, A.
\newblock Optimizing dnn computation with relaxed graph substitutions.
\newblock In \emph{SysML}, 2019.

\bibitem[Jouppi et~al.(2017)Jouppi, Young, Patil, Patterson, Agrawal, Bajwa,
  Bates, Bhatia, Boden, Borchers, et~al.]{tpu:2017}
Jouppi, N.~P., Young, C., Patil, N., Patterson, D., Agrawal, G., Bajwa, R.,
  Bates, S., Bhatia, S., Boden, N., Borchers, A., et~al.
\newblock In-datacenter performance analysis of a tensor processing unit.
\newblock In \emph{ISCA}, 2017.

\bibitem[Judd et~al.(2016)Judd, Albericio, Hetherington, Aamodt, and
  Moshovos]{stripes:2016}
Judd, P., Albericio, J., Hetherington, T., Aamodt, T.~M., and Moshovos, A.
\newblock Stripes: Bit-serial deep neural network computing.
\newblock In \emph{MICRO}, 2016.

\bibitem[Kahn(1962)]{kahn:1962}
Kahn, A.~B.
\newblock Topological sorting of large networks.
\newblock \emph{CACM}, 1962.

\bibitem[Ke{\ss}ler \& Bednarski(2001)Ke{\ss}ler and
  Bednarski]{dp_optimal:2001}
Ke{\ss}ler, C. and Bednarski, A.
\newblock A dynamic programming approach to optimal integrated code generation.
\newblock In \emph{LCTES}, 2001.

\bibitem[Laredo et~al.(2019)Laredo, Qin, Sch{\"u}tze, and Sun]{autokeras:2019}
Laredo, D., Qin, Y., Sch{\"u}tze, O., and Sun, J.-Q.
\newblock Automatic model selection for neural networks.
\newblock \emph{arXiv}, 2019.
\newblock URL \url{https://arxiv.org/pdf/1905.06010.pdf}.

\bibitem[Lattner \& Adve(2004)Lattner and Adve]{llvm:2004}
Lattner, C. and Adve, V.
\newblock {LLVM}: A compilation framework for lifelong program analysis \&
  transformation.
\newblock In \emph{CGO}, 2004.

\bibitem[LeCun et~al.(1990)LeCun, Denker, and Solla]{brain_damage:1990}
LeCun, Y., Denker, J.~S., and Solla, S.~A.
\newblock Optimal brain damage.
\newblock In \emph{NIPS}, 1990.

\bibitem[Lee et~al.(2003)Lee, Lee, Hwang, and Tsai]{vliw_scheduling:2013}
Lee, C., Lee, J.~K., Hwang, T., and Tsai, S.-C.
\newblock Compiler optimization on vliw instruction scheduling for low power.
\newblock \emph{TODAES}, 2003.

\bibitem[Liu et~al.(2019{\natexlab{a}})Liu, Simonyan, and Yang]{darts:2018}
Liu, H., Simonyan, K., and Yang, Y.
\newblock {DARTS}: Differentiable architecture search.
\newblock In \emph{ICLR}, 2019{\natexlab{a}}.
\newblock URL \url{https://openreview.net/forum?id=S1eYHoC5FX}.

\bibitem[Liu et~al.(2019{\natexlab{b}})Liu, Wang, Yu, Li, Sharma, and
  Wang]{pbqp:2019}
Liu, Y., Wang, Y., Yu, R., Li, M., Sharma, V., and Wang, Y.
\newblock Optimizing {CNN} model inference on {CPU}s.
\newblock In \emph{USENIX ATC}, 2019{\natexlab{b}}.

\bibitem[Mireshghallah et~al.(2020)Mireshghallah, Taram, Ramrakhyani, Jalali,
  Tullsen, and Esmaeilzadeh]{privacy:2020}
Mireshghallah, F., Taram, M., Ramrakhyani, P., Jalali, A., Tullsen, D., and
  Esmaeilzadeh, H.
\newblock Shredder: Learning noise distributions to protect inference privacy.
\newblock In \emph{ASPLOS}, 2020.

\bibitem[Mishra \& Marr(2018)Mishra and Marr]{apprentice:2017}
Mishra, A. and Marr, D.
\newblock Apprentice: Using knowledge distillation techniques to improve
  low-precision network accuracy.
\newblock In \emph{ICLR}, 2018.
\newblock URL \url{https://openreview.net/forum?id=B1ae1lZRb}.

\bibitem[{NVIDIA}(2017)]{tensorrt}
{NVIDIA}.
\newblock {TensorRT}: Programmable inference accelerator., 2017.
\newblock URL \url{https://developer.nvidia.com/ tensorrt}.

\bibitem[Parashar et~al.(2017)Parashar, Rhu, Mukkara, Puglielli, Venkatesan,
  Khailany, Emer, Keckler, and Dally]{scnn:2017}
Parashar, A., Rhu, M., Mukkara, A., Puglielli, A., Venkatesan, R., Khailany,
  B., Emer, J., Keckler, S.~W., and Dally, W.~J.
\newblock Scnn: An accelerator for compressed-sparse convolutional neural
  networks.
\newblock In \emph{ISCA}, 2017.

\bibitem[Paszke et~al.(2019)Paszke, Gross, Massa, Lerer, Bradbury, Chanan,
  Killeen, Lin, Gimelshein, Antiga, et~al.]{pytorch:2017}
Paszke, A., Gross, S., Massa, F., Lerer, A., Bradbury, J., Chanan, G., Killeen,
  T., Lin, Z., Gimelshein, N., Antiga, L., et~al.
\newblock {PyTorch}: An imperative style, high-performance deep learning
  library.
\newblock In \emph{NeurIPS}, 2019.

\bibitem[Plump(1999)]{graph_rewriting:1999}
Plump, D.
\newblock Term graph rewriting.
\newblock In \emph{Handbook Of Graph Grammars And Computing By Graph
  Transformation: Volume 2: Applications, Languages and Tools}. World
  Scientific, 1999.

\bibitem[Real et~al.(2019)Real, Aggarwal, Huang, and Le]{amoebanet:2018}
Real, E., Aggarwal, A., Huang, Y., and Le, Q.~V.
\newblock Regularized evolution for image classifier architecture search.
\newblock In \emph{AAAI}, 2019.

\bibitem[Rotem et~al.(2018)Rotem, Fix, Abdulrasool, Catron, Deng, Dzhabarov,
  Gibson, Hegeman, Lele, Levenstein, et~al.]{glow:2018}
Rotem, N., Fix, J., Abdulrasool, S., Catron, G., Deng, S., Dzhabarov, R.,
  Gibson, N., Hegeman, J., Lele, M., Levenstein, R., et~al.
\newblock Glow: Graph lowering compiler techniques for neural networks.
\newblock \emph{arXiv}, 2018.
\newblock URL \url{https://arxiv.org/pdf/1805.00907.pdf}.

\bibitem[Sch{\"o}sser \& Gei{\ss}(2007)Sch{\"o}sser and
  Gei{\ss}]{hw_dependent_rewriting:2007}
Sch{\"o}sser, A. and Gei{\ss}, R.
\newblock Graph rewriting for hardware dependent program optimizations.
\newblock In \emph{AGTIVE}, 2007.

\bibitem[Sharma et~al.(2018)Sharma, Park, Suda, Lai, Chau, Chandra, and
  Esmaeilzadeh]{bitfusion:2018}
Sharma, H., Park, J., Suda, N., Lai, L., Chau, B., Chandra, V., and
  Esmaeilzadeh, H.
\newblock {Bit Fusion}: Bit-level dynamically composable architecture for
  accelerating deep neural networks.
\newblock In \emph{ISCA}, 2018.

\bibitem[Sifre \& Mallat(2014)Sifre and Mallat]{depthwise:2014}
Sifre, L. and Mallat, S.
\newblock Rigid-motion scattering for image classification.
\newblock \emph{Ph.D. dissertation}, 2014.

\bibitem[Vasilache et~al.(2018)Vasilache, Zinenko, Theodoridis, Goyal, DeVito,
  Moses, Verdoolaege, Adams, and Cohen]{tc:2018}
Vasilache, N., Zinenko, O., Theodoridis, T., Goyal, P., DeVito, Z., Moses,
  W.~S., Verdoolaege, S., Adams, A., and Cohen, A.
\newblock {Tensor Comprehensions}: Framework-agnostic high-performance machine
  learning abstractions.
\newblock \emph{arXiv}, 2018.
\newblock URL \url{https://arxiv.org/pdf/1802.04730.pdf}.

\bibitem[Wang et~al.(2019)Wang, Liu, Lin, Lin, and Han]{haq:2019}
Wang, K., Liu, Z., Lin, Y., Lin, J., and Han, S.
\newblock {HAQ}: Hardware-aware automated quantization with mixed precision.
\newblock In \emph{CVPR}, 2019.

\bibitem[Wilken et~al.(2000)Wilken, Liu, and Heffernan]{ilp_optimal:2000}
Wilken, K., Liu, J., and Heffernan, M.
\newblock Optimal instruction scheduling using integer programming.
\newblock In \emph{PLDI}, 2000.

\bibitem[Wortsman et~al.(2019)Wortsman, Farhadi, and Rastegari]{wirings:2019}
Wortsman, M., Farhadi, A., and Rastegari, M.
\newblock Discovering neural wirings.
\newblock In \emph{NeurIPS}, 2019.

\bibitem[Wu et~al.(2019)Wu, Brooks, Chen, Chen, Choudhury, Dukhan, Hazelwood,
  Isaac, Jia, Jia, et~al.]{dl_edge:2019}
Wu, C.-J., Brooks, D., Chen, K., Chen, D., Choudhury, S., Dukhan, M.,
  Hazelwood, K., Isaac, E., Jia, Y., Jia, B., et~al.
\newblock Machine learning at facebook: Understanding inference at the edge.
\newblock In \emph{HPCA}, 2019.

\bibitem[Xie et~al.(2019)Xie, Kirillov, Girshick, and He]{randomnet:2019}
Xie, S., Kirillov, A., Girshick, R., and He, K.
\newblock Exploring randomly wired neural networks for image recognition.
\newblock In \emph{ICCV}, 2019.

\bibitem[Zhang et~al.(2019)Zhang, Yang, Yan, Li, Teague, Chen,
  et~al.]{swiftnet:2019}
Zhang, T., Yang, Y., Yan, F., Li, S., Teague, H., Chen, Y., et~al.
\newblock Swiftnet: Using graph propagation as meta-knowledge to search highly
  representative neural architectures.
\newblock \emph{arXiv}, 2019.
\newblock URL \url{https://arxiv.org/pdf/1906.08305.pdf}.

\bibitem[Zhou et~al.(2016)Zhou, Wu, Ni, Zhou, Wen, and Zou]{dorefa:2016}
Zhou, S., Wu, Y., Ni, Z., Zhou, X., Wen, H., and Zou, Y.
\newblock {DoReFa-Net}: Training low bitwidth convolutional neural networks
  with low bitwidth gradients.
\newblock \emph{arXiv}, 2016.
\newblock URL \url{https://arxiv.org/pdf/1606.06160.pdf}.

\bibitem[Zhu \& Gupta(2018)Zhu and Gupta]{prune:2017}
Zhu, M. and Gupta, S.
\newblock To prune, or not to prune: exploring the efficacy of pruning for
  model compression.
\newblock In \emph{ICLR Workshop}, 2018.
\newblock URL \url{https://openreview.net/forum?id=S1lN69AT-}.

\bibitem[Zoph \& Le(2017)Zoph and Le]{nas:2016}
Zoph, B. and Le, Q.~V.
\newblock Neural architecture search with reinforcement learning.
\newblock \emph{ICLR}, 2017.
\newblock URL \url{https://openreview.net/forum?id=r1Ue8Hcxg}.

\bibitem[Zoph et~al.(2018)Zoph, Vasudevan, Shlens, and Le]{nasnet:2018}
Zoph, B., Vasudevan, V., Shlens, J., and Le, Q.~V.
\newblock Learning transferable architectures for scalable image recognition.
\newblock In \emph{CVPR}, 2018.

\end{thebibliography}
}

\vfill
\clearpage
\appendix

\section{Comparison between Irregularly Wired Neural Networks and Conventional Regular Topology Neural Networks}
\label{sec:vwwc}

% purpose of this figure is to visualize the effect of these networks
\begin{figure}[H]
	\centering
	\vspace{-2ex} % FORMATTING
	\subfigure[ImageNet accuracy vs number of multiply-and-accumulate.]{\includegraphics[width=0.95\linewidth]{figs/networks.pdf}}
	\subfigure[ImageNet accuracy vs number of parameters.]{\includegraphics[width=0.95\linewidth]{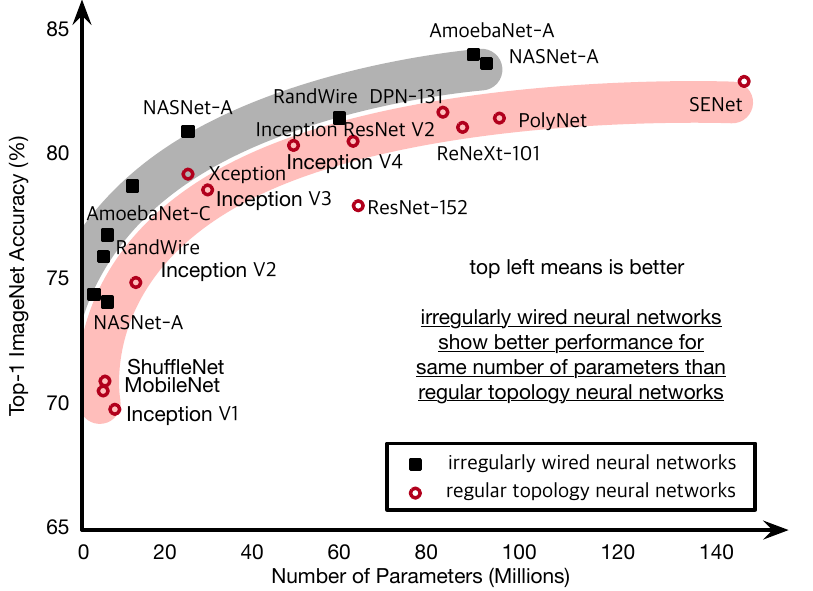}}
	\vspace{-2ex}
	\caption{ImageNet accuracy vs number of multiply-and-accumulate or parameters, where irregularly wired neural networks show higher performance for same amount of compute or number of parameters than regular topology neural networks.}
	\vspace{-2ex}
\end{figure}
\section{Comparison with TensorFlow Lite}
In addition to the relative reductions provided in Figure~\ref{fig:eval_overall}, Figure~\ref{fig:comparison} provides the raw numbers of the peak memory footprint for the benchmark irregularly wired neural networks.

\begin{figure}[h]
	\centering
	\vspace{-1ex} % FORMATTING
	\includegraphics[width=0.9\linewidth]{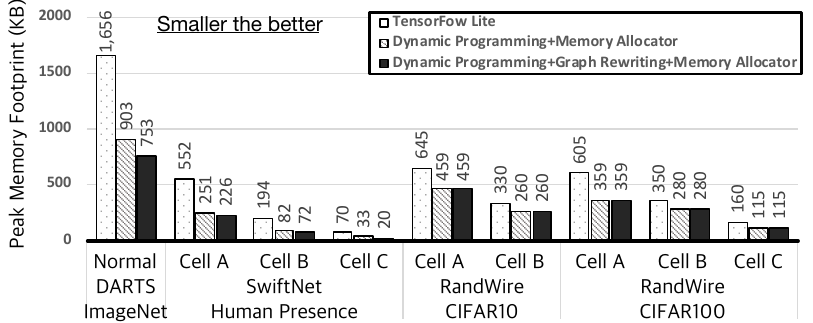}
	%\subfigure[Reduction in terms of peak memory footprint.]{\includegraphics[width=0.95\linewidth]{figs/eval_all_edit.pdf}\label{fig:eval_memory}}
	%\\[-1ex]
	%\subfigure[Comparison of static scheduling time.]{\includegraphics[width=0.95\linewidth]{figs/eval_time_edit.pdf}\label{fig:eval_time}}
	\vspace{-2ex}
	\caption{Peak memory footprint of running irregularly wired neural networks on \name and TensorFlow Lite.}
	\vspace{-2ex}
	\label{fig:comparison}
\end{figure}

\section{Proof for Optimal Peak Memory Footprint from the Dynamic Programming-based Scheduling}
Here we prove the optimality of the above dynamic programming-based scheduling algorithm.

\begin{theorem}
In order to find a schedule $s^*$ with an optimal peak memory consumption $\mu^*$, it is sufficient to keep just one schedule-peak memory pair ($s_i$, $z_i$) in $S_{Ti}$ for each zero-indegree set $z_i$, and to append subsequent nodes on top of $s_i$ to get $s_{i+1}$ in each search step. 
\end{theorem}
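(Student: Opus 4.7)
The plan is to prove an optimal-substructure property that licenses the memoization: whenever two partial schedules of length $i$ produce the same zero-indegree set $z_i$, they are interchangeable up to the prefix's own peak, so retaining only the one with the smallest $\mu_{peak,i}$ loses no optimal completion. I would proceed in three stages: (i) a state-sufficiency lemma saying the future subproblem depends only on $z_i$; (ii) a peak-memory decomposition saying peak factors as the maximum of prefix peak and suffix peak; (iii) an exchange argument that stitches these together.

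First, I would establish the state-sufficiency lemma. Given the fixed DAG $\mathcal{G}$, the set of already-scheduled nodes $V_i \subseteq V$ is recoverable from $z_i$ together with $\mathcal{G}$ (it is the unique set whose induced unscheduled-indegree frontier equals $z_i$), so any two partial schedules $s_i, s_i' \in S_{Ti}$ sharing $z_i$ schedule the same set of nodes. A tensor produced by node $u$ is live in memory iff $u \in V_i$ and some successor of $u$ lies outside $V_i$; this condition depends only on $V_i$, so the live-tensor set and hence $\mu_i$ are identical for $s_i$ and $s_i'$. Consequently, the set of valid continuations and the memory trajectory induced by any continuation $t$ are identical whether $t$ is appended to $s_i$ or to $s_i'$.

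Second, the peak decomposition: for any complete schedule $s = s_i \cdot t$,
\begin{equation}
\mu_{peak}(s) = \max\bigl(\mu_{peak,i}(s_i),\; \mu_{peak}(t \mid s_i)\bigr),
\end{equation}
where $\mu_{peak}(t \mid s_i)$ is the peak along $t$ measured from the memory state left by $s_i$. This is immediate from the running-max definition of peak. Combining this with the state-sufficiency lemma gives $\mu_{peak}(t \mid s_i) = \mu_{peak}(t \mid s_i')$ whenever $z_i(s_i) = z_i(s_i')$. Now let $s^* = s_i^* \cdot t^*$ be any globally optimal schedule and let $\hat{s}_i$ be the partial schedule retained in $\mathcal{M}_i[z_i^*]$, so by construction $\mu_{peak,i}(\hat{s}_i) \leq \mu_{peak,i}(s_i^*)$. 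Then $t^*$ is also a valid continuation of $\hat{s}_i$, and
\begin{equation}
\mu_{peak}(\hat{s}_i \cdot t^*) = \max\bigl(\mu_{peak,i}(\hat{s}_i),\, \mu_{peak}(t^* \mid \hat{s}_i)\bigr) \leq \mu_{peak}(s^*) = \mu^*,
\end{equation}
so the algorithm's completion attains the optimum. A straightforward induction on $i$ propagates this through all $n$ search steps, and at $i = n$ the unique memoized entry holds $s^*$.

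The main obstacle will be pinning down the state-sufficiency lemma rigorously — specifically, showing that the deallocation rule of Algorithm~\ref{alg:dp_scheduling} (deallocate $p_i$ exactly when its outdegree among unscheduled nodes hits zero) yields a live-tensor set that is truly a function of $V_i$ alone, independent of the order in which nodes of $V_i$ were visited. I would discharge this by induction on $i$, using the fact that "predecessor $p$ has a surviving successor" is a purely set-theoretic predicate on $V_i$ and the fixed edge set, which commutes with any reordering of the prefix that preserves topological validity.
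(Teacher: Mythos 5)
Your proposal is correct, and it takes a genuinely different (and more rigorous) route than the paper. The paper proves the theorem by contradiction with a node-level substitution: it assumes a suboptimal node $v_i$ appears in $s^*$ at step $i$, replaces it with $u_i^* \in z_i$, takes the $\min$ of the two resulting footprints, and recursively applies the exchange to obtain an alternative schedule $s^{*}{}'$ with $\mu^{*}{}' \leq \mu^*$, contradicting optimality. Crucially, that argument silently assumes exactly what you isolate as your state-sufficiency lemma: that after the swap the residual subproblem is unchanged, i.e., that two prefixes realizing the same zero-indegree set $z_i$ cover the same node set $V_i$, hold the same resident memory $\mu_i$, and admit identical continuations with identical memory trajectories. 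You discharge this explicitly --- injectivity of the map from downward-closed sets $V_i$ to frontiers $z_i$ follows from your minimal-node argument (a topologically minimal element of the symmetric difference of two ideals would lie in one frontier but not the other), and liveness of a tensor is the purely set-theoretic predicate ``has a successor outside $V_i$,'' which is order-independent --- and then you combine it with the running-max decomposition of the peak and a prefix-level exchange, giving a direct dominance argument rather than a contradiction. What your route buys is rigor at precisely the point where the paper's proof is thinnest; what the paper's route buys is brevity. One bookkeeping caution: your inequality $\mu_{peak,i}(\hat{s}_i) \leq \mu_{peak,i}(s_i^*)$ holds ``by construction'' only through the induction you defer to the end --- the entry in $\mathcal{M}_i[z_i^*]$ dominates \emph{all} length-$i$ prefixes realizing $z_i^*$ only because each such prefix extends an entry already dominated at step $i-1$ --- so that induction is load-bearing rather than a formality, and it in turn leans on the state-sufficiency lemma (the memoized prefix and $s_i^*$ must share the same $\mu_i$ for the extension costs to coincide); you correctly flag both dependencies.
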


% https://web.stanford.edu/class/archive/cs/cs161/cs161.1138/handouts/140%20Guide%20to%20Dynamic%20Programming.pdf
% http://ranger.uta.edu/~huang/teaching/CSE5311/CSE5311_Lecture16.pdf
% http://www2.hawaii.edu/~nodari/teaching/f15/Notes/Topic-12A.html
% https://stackoverflow.com/questions/9553162/what-is-the-cut-and-paste-proof-technique
\begin{proof}
If $i = 0$, the optimal $s_0$ is an empty sequence and $\mu_0$ must be 0.
On the other hand, if $i \geq 1$, assume that (\emph{suboptimal}) $v_i$ constitutes $s^*$, substituting $u_i^* \in z_i$ and achieves $\mu^*$.
In such case, let $v_i$ be replaced with (\emph{optimal}) $u_i^*$, which will result in $\mu_{peak} \leftarrow \min(\mu_i + \prod v_i.\text{shape}, \mu_i + \prod u_i^*.\text{shape})$, and $\mu_{i+1}$ is calculated by deducting $\prod p_i.\text{shape}, \forall p_i \in (u_i.\text{preds} \cap \text{zero-outdegree}(s_{i+1}, \mathcal{G}))$.
By recursively applying $u_k$ for rest of the search steps $k$, the algorithm should find an alternative sequence $s^*$$'$ with $\mu^*$$'$$ \leq \mu^*$ due to the $\min$ operator above, contradicting the original assumption on the optimality of $s^*$.
Therefore, our algorithm finds a schedule with an optimal peak memory consumption.
\end{proof}
\section{Complexity Analysis of the Dynamic Programming-based Scheduling  and Proof}
\label{sec:complexity}
We compare the complexity of exhaustively exploring $\mathcal{S}_T$ and our dynamic programming-based scheduling.
While the algorithm both lists candidate schedules and calculates their peak memory footprint, we consider the peak memory footprint calculation as one operation while deriving the complexity.
In order to visualize the analysis, we invent $\mathcal{G}$ in Figure~\ref{fig:complexity_toy} to demonstrate the upper bound complexity of each algorithm.
It has a single entry node and a single exit node \circlewhite{\textsf{\size{8}{A}}} and \circlewhite{\textsf{\size{8}{Z}}}, respectively, and all other nodes constitute independent branches between the entry and the exit node.

\begin{figure}[H]
	\centering	
	%\vspace{-2ex} % FORMATTING
	\includegraphics[width=0.95\linewidth]{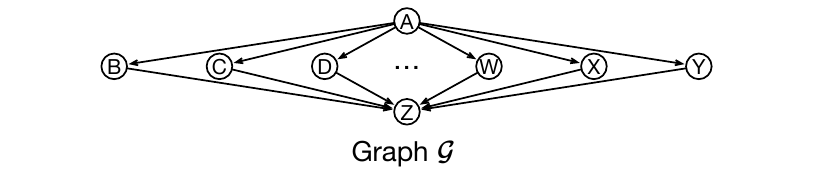}
	\vspace{-2ex}
	\caption{Topology of $\mathcal{G}$ to demonstrate the upper bound complexity of each algorithm.}
	\vspace{-2ex}
	\label{fig:complexity_toy}
\end{figure}

First, we demonstrate the complexity of the recursive topological sorting that exhaustively explores $\mathcal{S}_T$.
Since there is a single entry node and a single exit node, there will be $|V-2|$ remaining nodes and these nodes can be scheduled independently of one another, thereby the number of candidate schedules become $\langle|V-2|!\rangle$ and the overall complexity becomes $\mathcal{O}(|V|!)$, where $|V|$ denotes the number of nodes.
On the other hand, for the dynamic programming we calculate the number of candidates by utilizing the number of schedules that gets memoized.
Our memoization takes advantage of the zero-indegree sets $z$ for each search step.
%
%Following first demonstrate the number of $z$ in each search step which also means the number of nodes scheduled.

For the first and the last search steps, we assume that we have a single entry node and a single exit node.
On the other hand, since the number of nodes scheduled in search step $i$ would be $i-1$, the maximum number of entries for memoization is $\binom{|V|-2}{i-1}$.
On top of this, each step would make an iteration over the set of candidate nodes to discover the next search step's $z$.
Therefore, search step $1$ would explore $|V|-2$ nodes and the search steps $2$ to $|V|-1$ would iterate over $|V| - 1 - i$ nodes.
Summarizing this would yield:
\begin{align*}
&1 + 1\times(|V|-2) + {|V|-2 \choose 1}\times(|V|-3) + \\
&\qquad \ldots + {|V|-2 \choose |V|-2}\times0 + 1 \\
&= 1 + {|V|-2 \choose 0}\times(|V|-2) + {|V|-2 \choose 1}\times(|V|-3) + \\
&\qquad \ldots + {|V|-2 \choose |V|-2}\times0 + 1 \\
&= 2 + \sum_{i=0}^{|V|-2} {|V|-2 \choose i}\times(|V|-2-i) \\
&= 2 + (|V|-2)\times2^{|V|-3} \\
&\leq (|V|-2)\times2^{|V|-2} \qquad \text{, for $|V| \geq 4$} \\
&\le |V|\times2^{|V|}
\end{align*}
As a result, we can see that our dynamic programming-based scheduling algorithm is bounded by $\mathcal{O}(|V|\times2^{|V|})$.
By using Stirling's approximation on the complexity of the recursive topological sorting, we can prove that the dynamic programming-based scheduling algorithm should be significantly faster than the recursive topological ordering.

%\vspace{-2ex}
%\begin{align*}
%&\text{Search step 0:}       & 1              \qquad & \text{}                         \\
%&\text{Search step 1:}       & 1              \qquad & \qquad\text{, single entry node.} \\
%&\text{Search step 2:}       & {|V|-2 \choose 1}     & \text{}                         \\
%&                            & \vdots         \qquad &                                 \\
%&\text{Search step $|V|-2$:} & {|V|-2 \choose |V|-1} & \text{}                         \\
%&\text{Search step $|V|-1$:} & {|V|-2 \choose |V|-2} & \text{}                         \\
%&\text{Search step $|V|$:}   & 1              \qquad & \qquad\text{, single exit node.}
%\end{align*}
%\input{body/networks}

\end{document}